\definecolor{gray}{rgb}{0.5,0.5,0.5}
\newcommand{\Mk}{\mathbf{M}_k}
\newcommand{\sumK}{\sum\limits_{k =1}^K}
\newtheorem{theorem}{Theorem}%[section]
\newtheorem{corollary}{Corollary}
\newcommand{\Exp}{{\mathbb{E}}}
\newcommand{\Expect}[2]{\Exp_{#1}\left\lbrace #2 \right\rbrace}
\newcommand{\diag}[1]{\mathrm{diag}\left(#1\right)}
\newcommand{\expb}[1]{ \exp \left\lbrace  #1 \right\rbrace  } 
\newcommand{\Expb}[1]{ \Exp_{p_n} \left\lbrace  #1 \right\rbrace  } 
\newcommand{\bb}{\mathbf{b}}
\newcommand{\bc}{\mathbf{c}}
\newcommand{\bd}{\mathbf{d}}
\newcommand{\be}{\mathbf{e}}
\newcommand{\bx}{\mathbf{x}}
\newcommand{\bA}{\mathbf{A}}
\newcommand{\bB}{\mathbf{B}}
\newcommand{\bC}{\mathbf{C}}
\newcommand{\bD}{\mathbf{D}}
\newcommand{\bF}{\mathbf{F}}
\newcommand{\bG}{\mathbf{G}}
\newcommand{\bH}{\mathbf{H}}
\newcommand{\bI}{\mathbf{I}}
\newcommand{\bM}{\mathbf{M}}
\newcommand{\bN}{\mathbf{N}}
\newcommand{\bP}{\mathbf{P}}
\newcommand{\bQ}{\mathbf{Q}}
\newcommand{\bT}{\mathbf{T}}
\newcommand{\bU}{\mathbf{U}}
\newcommand{\bV}{\mathbf{V}}
\newcommand{\bW}{\mathbf{W}}
\newcommand{\bX}{\mathbf{X}}
\newcommand{\bY}{\mathbf{Y}}
\newcommand{\bZ}{\mathbf{Z}}
\newcommand{\bzero}{\mathbf{0}}
\newcommand{\bone}{\mathbf{1}}
\newcommand{\bSigma}{{\boldsymbol\Sigma}}
\newcommand{\blambda}{{\boldsymbol\lambda}}
\newcommand{\bLambda}{{\boldsymbol\Lambda}}
\newcommand{\bPhi}{{\boldsymbol\Phi}}
\newcommand{\bOmega}{{\boldsymbol\Omega}}
\newcommand{\bPi}{{\boldsymbol\Pi}}
\newcommand{\bGamma}{{\boldsymbol\Gamma}}
\begin{document}
%
% paper title
% can use linebreaks \\ within to get better formatting as desired
\title{2D Beam Domain Statistical CSI Estimation for Massive MIMO Uplink}

% author names and affiliations
% use a multiple column layout for up to three different
% affiliations
\author{An-An~Lu,~\IEEEmembership{Member,~IEEE}, Yan Chen, ~Xiqi~Gao,~\IEEEmembership{Fellow,~IEEE} 
%\thanks{Manuscript received xxxx, 2014; revised xxxxx, 2015.}
\thanks{A.-A. Lu and X. Q. Gao are with the National Mobile Communications Research Laboratory (NCRL), Southeast University,
Nanjing, 210096 China, and also with Purple Mountain Laboratories, Nanjing 211111, China, e-mail: aalu@seu.edu.cn, xqgao@seu.edu.cn.}
\thanks{Y. Chen is with the National Mobile Communications Research Laboratory (NCRL), Southeast University,
Nanjing, 210096 China, }
%\thanks{F. C. Ouyang and Z. J. Chao are with the Wireless Network Research Department, Huawei Technologies Co., Ltd, Email: ouyangfengchen@huawei.com, chaozhijun@huawei.com.}
}

%\markboth{Journal of \LaTeX\ Class Files,~Vol.~6, No.~1, January~2014}%
%{Shell \MakeLowercase{\textit{et al.}}: Bare Demo of IEEEtran.cls for Journals}

% make the title area
\maketitle

\begin{abstract}
%\boldmath
In this paper, we investigate the beam domain statistical channel state information (CSI) estimation for the two dimensional (2D) beam based statistical channel model (BSCM) in massive MIMO systems.  The problem is to estimate the beam domain channel power matrices (BDCPMs) based on multiple receive pilot signals. A receive model shows the relation between the statistical property of the receive pilot signals and the BDCPMs is derived from the 2D-BSCM. On the basis of the receive model, we formulate an optimization problem with the Kullback-Leibler (KL) divergence.
By solving the optimization problem, a novel method to estimate the statistical CSI without involving instantaneous CSI is proposed. The proposed method has much lower complexity than the MMV focal underdetermined
system solver (M-FOCUSS) algorithm. We further reduce the complexity of the proposed method by utilizing the circulant structures of particular matrices in the algorithm. We also showed the generality of the proposed method by introducing another application.
Simulations results show that the proposed method works well and bring significant performance gain when used in channel estimation.
\end{abstract}
% IEEEtran.cls defaults to using nonbold math in the Abstract.
% This preserves the distinction between vectors and scalars. However,
% if the conference you are submitting to favors bold math in the abstract,
% then you can use LaTeX's standard command \boldmath at the very start
% of the abstract to achieve this. Many IEEE journals/conferences frown on
% math in the abstract anyway.

\begin{IEEEkeywords}
Statistical channel state information (CSI), massive multi-input multi-output (MIMO), beam based statistical channel model (BSCM), beam domain channel power matrices (BDCPMs), KL-divergence.
\end{IEEEkeywords}

% For peer review papers, you can put extra information on the cover
% page as needed:
% \ifCLASSOPTIONpeerreview
% \begin{center} \bfseries EDICS Category: 3-BBND \end{center}
% \fi
%
% For peerreview papers, this IEEEtran command inserts a page break and
% creates the second title. It will be ignored for other modes.
\IEEEpeerreviewmaketitle

\newpage
\section{Introduction}
Massive multiple-input multiple-output (MIMO) \cite{marzetta2016fundamentals, bjornson2014massive, lu2014overview, clerckx2016rate, sanguinetti2019toward} has been one of the key enabling technologies of the fifth generation (5G) wireless communications networks. It provides enormous capacity gains and achieves high energy efficiency by employing a large number of antennas at the base station (BS). In massive MIMO systems, multi-user MIMO (MU-MIMO) \cite{liu2012downlink} transmissions on the same time and frequency resource are enhanced significantly. Furthermore, massive MIMO also brings many new applications and services \cite{barneto2021full, zheng2020joint}. For the antenna array equipped in the BS, the uniform planar array (UPA) is widely used in practical massive MIMO systems since it has compact size. In this paper, we investigate the three dimensional (3D) massive MIMO systems equipped with UPA.

For massive MIMO systems, the beam based statistical channel model (BSCM) \cite{yu2021hf, wang2021robust, lu2020robust, lu2021iterative} is used in the literature for robust linear precoder design and system performance analysis.
The BSCM is extended from the unitary–independent–unitary (UIU)  model \cite{tulino2005impact} or the jointly correlated channel model \cite{sayeed2002deconstructing, weichselberger2006stochastic, sunbeam} with the eigen-matrices being replaced by the oversampled discrete Fourier transform (DFT)  matrix.
It is more accurate than the beam domain channel model based on the DFT based beams. Thus, the linear precoder design based on the BSCM can achieve significant performance gain compared with that based on the beam domain channel model as shown in \cite{lu2021iterative, lu2020robust}. The model can also be extended to the angle-delay domain as a two dimensional (2D) BSCM, which also brings performance gain in the channel estimation \cite{Yang2022channel}. To achieve these performance gains in massive MIMO systems, the statistical parameters in the channel model need to be known in advance. Although  the statistical parameters for the BSCM is very important, the problem of estimating them is not mentioned in those works.  
Due to its importance, we consider the problem of estimating the statistical CSI for the 2D-BSCM based on the receive pilot signals in this paper.

In the literature, the statistical CSI is often obtained based on the estimated instantaneous CSI \cite{sunbeam}, or obtained  through the 
expectation–maximization (EM) algorithms \cite{wen2014channel} which iteratively estimate the instantaneous and statistical CSI. There also exists works \cite{anjinappa2020off} that obtain the covariance matrix directly without instantaneous CSI being involved. For the 2D-BSCM, the problem becomes to obtain the channel power matrices in the beam domain, which has not been addressed in the literature. In the 2D-BSCM, the angle-delay domain or the beam domain channel coefficient is sparse due to the limit number of resolvable multi-pathes. When there exists no noise, the considered problem can also be viewed as a multiple measurement vectors (MMV) problem \cite{cotter2005sparse}, which is a classic compressive sensing problem. The MMV focal underdetermined
system solver (M-FOCUSS) algorithm \cite{cotter2005sparse} can be applied to solve the problem and obtain the instantaneous CSI, which are then used to compute the statistical CSI.

However, the noise can not be ignored in the practical massive MIMO  systems. Furthermore, the computational complexity of the M-FOCUSS method is not satisfied since it needs to compute the instantaneous CSI first and the dimension for the MMV problem in the considered massive MIMO is very high.  Furthermore, the statistical CSI of the 2D-BSCM can also be utilized to improve the estimating performance of instantaneous CSI in practical massive MIMO systems. Thus, it is better to obtain the statistical CSI for the considered problem before estimating the instantaneous CSI. In conclusion, we need a new method with lower complexity to estimate the statistical CSI for the 2D-BSCM.

To achieve this goal, we first derive a theorem which gives the relation between the statistical property of the consider channel matrices and the beam domain channel power matrices (BDCPMs). Then, we derive a receive model for the BDCPMs based on this relation. Based on the derived model, the statistical parameters of the 2D-BSCM can be estimated directly without involving the instantaneous CSI. To estimate the BDCPM, we then  formulate a new optimization problem based on the Kullback-Leibler (KL) divergence. By solving the problem, we propose a novel method to obtain the BDCPM for the 2D-BSCM. The proposed algorithm has much lower complexity than that of the M-FOCUSS method. Furthermore, we further reduce the complexity of the proposed method by utilizing the circulant structures of certain matrices in the algorithm. We also show the generality of the proposed method by presenting another application.

The main contributions of this paper are summarized as follows:
\begin{enumerate}
\item We derive a receive model for the BDCPM of the 2D-BSCM. The receive model can be used to estimate the statistical CSI directly without involving the instantaneous CSI.
\item We propose a novel method to obtain the BDCPMs based on the receive model and the KL divergence. Compared with the M-FOCUSS method, the proposed method has much lower complexity.
\item We further reduce the complexity of the proposed method by utilizing the circulant structure of certain matrices in the algorithm.
\end{enumerate}

The rest of this article is organized as follows. Section II introduces the system model and formulates the problem. Section III presents the estimation of beam domain power matrices. Section IV provides simulation results. Section V draws the conclusion. The proofs of the theorems and corollary are provided in the Appendices.

\subsection{Notations}
Throughout this paper, uppercase boldface letters and lowercase boldface letters are used for matrices and vectors, respectively. The superscripts $(\cdot)^*$, $(\cdot)^T$ and $(\cdot)^H$ denote the conjugate, transpose and conjugate transpose operations, respectively. The operator ${\mathbb E}\{\cdot\}$ denotes the mathematical expectation operator. In some cases, where it is not clear, we will employ subscript to emphasize the definition. The operators $\odot$ and $\otimes$  represent the Hadamard and Kronecker product, respectively.  
We use $\bzero_{N\times M}$ and $\bone_{N,M}$ to denote $N\times M$ matrices or vectors of all zeros and all ones, respectively.
The $N \times N$ identity matrix is denoted by $\mathbf{I}_N$, and $\bI_{N\times M}$ is used to denote $[\bI_N~\bzero_{N\times(M-N)} ]$ when $N<M$ and $[\bI_M~\bzero_{M\times(N-M)} ]^T$ when $N>M$. The subscripts of $\bI,\bzero$ and $\bone$ can sometimes be omitted for convenience. 
We use $[\mathbf{A}]_{ij}$ to denote the $(i,j)$-th entry of the matrix $\mathbf{A}$. The operators ${\rm{tr}}(\cdot)$ and $\det(\cdot)$ represent the matrix trace and determinant, respectively. We use $\diag{\bX}$ to denote a column vector composed of the main diagonal elements of a square matrix $\bX$, and $\diag{\bx}$ to denote the diagonal matrix with $\bx$ along
its main diagonal. A $N$-dimensional normalized DFT matrix is denoted as $\bF_{N}$. We further define the permutation matrix as
\begin{align}
\bPi_N^n=\begin{bmatrix}
\bzero & \bI_{N-n} \\
\bI_n & \bzero
\end{bmatrix},
\end{align}
where $0\le n\le N$.

\section{System Model and Problem Formulation}

\subsection{System Model}%Beam-Based Statistical Channel Model
We consider a 3D massive MIMO system with frequency selective fading channels.
The system consists of one BS equipped with a UPA array and $K$ user terminals (UTs) with single antennas. The number of the antennas at the BS is $M_r$, where the numbers of antennas at each
vertical column and horizontal row are $M_{r,z}$ and $M_{r,x}$, respectively. 
The orthogonal frequency division multiplexing (OFDM) \cite{stuber2004broadband} modulation is used to transform the frequency selective fading channel into multiple of parallel channels. Thus, the considered system is a massive MIMO-OFDM system. The number of subcarriers in the massive MIMO-OFDM system is $M_c$, and $M_p$ subcarriers are used for uplink pilot signal transmission. The
length of the cyclic prefix (CP) and the sampling interval are denoted  as $M_g$ and $T_s$.
%The number of subcarriers of each OFDM symbol is $N_c$. The length of cyclic prefix (CP) is $N_g$.

We restrict our considerations to stationary channels and use the BSCM to describe the spatial-temporal correlations of each channel. 
We denote the polar and azimuthal angles of arrival at the BS by $\theta_r,\phi_r$. Let $d_z$ and $d_x$ be the antenna spacing of each row and each column of the UPA. Let $\Delta_z = \frac{d_z}{\lambda}$, $\Delta_x = \frac{d_x}{\lambda}$, and $u_r$ and $v_r$ denote the directional cosines with respect to the $z$ axis and $x$ axis, respectively. Then, we have  $u_r=\cos\theta_r$ and $v_r=\sin\theta_r\cos\phi_r$. 
The steering vector at the BS side is given by 
\begin{equation}
        \mathbf{a}_r(u_r, v_r)=  \mathbf{v}_z(u_r) \otimes \mathbf{v}_x(v_r),
        \label{eq:vector_a_kronecker_product_form}
\end{equation}
where
\begin{align}
        \mathbf{v}_z(u_r)= [1 ~~ e^{-j2\pi\Delta_z  {u_r}} ~~  \cdots ~~ e^{-(M_{z}-1)j2\pi \Delta_z  {u_r}}]^T, \\
        \mathbf{v}_x(v_r)= [1 ~~ e^{-j2\pi \Delta_x {v_r}} ~~  \cdots ~~ e^{-(M_{x}-1)j2\pi \Delta_x  {v_r}}]^T.
\end{align}
In this paper, both $d_z$ and $d_x$ are assumed equal to $\frac{1}{2}\lambda$. Then, we obtain that $\Delta_z= \Delta_x=\frac{1}{2}$. 
Let $\mathbf{V}$ be the matrix of sampled steering vectors defined as
\begin{align}
\mathbf{V} = \mathbf{V}_z \otimes \mathbf{V}_x \in \mathbb{C}^{M_{r} \times N_{r}},
\end{align}
where
\begin{align}
    \mathbf{V}_z&=[\mathbf{v}_z(u_{r,1}) ~~ \mathbf{v }_z(u_{r,2}) ~~ \cdots ~~ \mathbf{v}_z(u_{r,N_z})]\in \mathbb{C}^{M_{r,z} \times N_{r,z}}, \\
    \mathbf{V}_x&=[\mathbf{v}_x(v_{r,1}) ~~ \mathbf{v }_x(v_{r,2}) ~~ \cdots ~~ \mathbf{v}_x(v_{r,N_x})]\in \mathbb{C}^{M_{r,x} \times N_{r,x}}.
\end{align}
We define $N_{a,z}=\frac{N_z}{M_z}$ and $N_{a,x}=\frac{N_x}{M_x}$ as the vertical and horizontal angle domain fine factors (FFs), respectively. The matrices $\bV_z$ and $\bV_x$ are oversampled DFT matrices when $N_{a,z}$ and $N_{a,x}$ are integers, and the directional cosines are uniformly sampled in the range of -1 to 1. Then, $\bV_z$ and $\bV_x$ can be represented as $\bV_z=\bI_{M_z,N_z}\bF_{N_z}$ and $\bV_x=\bI_{M_x,N_x}\bF_{N_x}$, respectively.

The frequency basis vector $\mathbf{b}_r(\tau_r) \in \mathbb{C}^{M_p \times 1}$ is given by
\begin{equation}
        \mathbf{b}_r(\tau_r)= [1 ~~ e^{-j2\pi \Delta_f{\tau_r}} ~~ ~~ \cdots ~~ e^{-(M_{p}-1)j2\pi \Delta_f{\tau_r}}]^T,
\end{equation}
where $\Delta_f=\frac{1}{M_cT_s}$ is the frequency spacing between neighboring carriers and $\tau_r$ is the delay.  
We define the matrix of sampled frequency basis vectors as
    \begin{equation}
        \mathbf{U}=[\mathbf{b}_r(\tau_{r,1}) ~~ \mathbf{b}_r(\tau_{r,2}) ~~ \cdots ~~ \mathbf{b}_r(\tau_{r,N_p})] \in \mathbb{C}^{M_{p} \times N_{p}}.
    \end{equation}
The delay domain fine factor is defined as $N_{a,p}=\frac{N_p}{M_p}$. Similarly, $\bU$ is an oversampled DFT matrix when $N_{a,p}$ is an integer and $\tau_r$ is uniformly sampled as $\tau_{r,\ell}=\frac{\ell-1}{N_p\Delta_f}$. In this case, we have that $\bU=\bI_{M_p,N_p}\bF_{N_p}$. We assume that the delay is within the guard interval, i.e., $\tau_r\le M_gT_s$. We define $M_f=\lceil\frac{M_pM_g}{M_c}\rceil$ and $N_f=N_{a,p}M_f$, where the notation $\lceil\cdot\rceil$ represents rounding upwards. Then, the vector set $\left\{ \bb_r(\tau_{r,1}),\bb_r(\tau_{r,2}),...,\bb_r(\tau_{r,N_{f}}) \right\}$ is enough to contain all the sampled frequency basis vectors since $\frac{N_f}{N_p\Delta_f}\ge M_gT_s$. Thus, we further define $\bU_f\in \mathbb{C}^{M_p\times N_f}$ as
\begin{equation}
\mathbf{U}_f=\bU\bI_{N_p,N_f}=[\mathbf{b}_r(\tau_{r,1}) ~~ \mathbf{b}_r(\tau_{r,2}) ~~ \cdots ~~ \mathbf{b}_r(\tau_{r,N_f})].
\end{equation}

%%%%%%%%%%%%%%%%%%%%%%%
%	以上需要修改，尤其是Nf的定义(已修改)
%	后面注意U和V都没有归一化，快速实现需要检查
%%%%%%%%%%%%%%%%%%%%%%%

By using the 2D-BSCM, the space-frequency domain channel matrix between the $k$-th UE and the BS for the $t$-th OFDM symbol can be modeled as \cite{yu2021hf, wang2021robust, lu2020robust, lu2021iterative}
    \begin{equation}
        \mathbf{H}_{k,t}=\mathbf{V}(\mathbf{M}_k\odot\mathbf{W}_{k,t})\mathbf{U}_f^T,
        \label{eq:channel_matrix_correlation_model}
    \end{equation}
where the matrix $\mathbf{M}_k$ is an $N_r \times N_f$  deterministic matrix and remains unchanged in different OFDM symbols, and $\mathbf{W}_{k,t}$ is a complex Gaussian random vector consisting of independent and identically distributed (i.i.d.) elements with zero mean and unit variance.   We also assume that $\bW_{k,t}$ and $\bW_{k',t}$ are independent of each other when $k\ne k'$. We define $\mathbf{G}_{k,t}=\mathbf{M}_k\odot\mathbf{W}_{k,t}$, which  is the angle-delay domain channel matrix and also called the two dimensional beam domain channel matrix (2D-BDCM). The  2D-BDCPM of the $k$-th user is defined as $\mathbf{\Omega}_k=\mathbf{M}_k\odot\mathbf{M}_k$, which is a sparse matrix since most of the channel power is distributed in a limited number of resolvable spatial directions and time delays.

%%
%	待修改：H小标
%%

\subsection{Problem Formulation} 

In the two dimensional channel model \eqref{eq:channel_matrix_correlation_model}, the matrix $\mathbf{U}_f$ and $\mathbf{V}$ are deterministic matrices. The unknown statistical parameter is the matrix $\Mk$ or equivalently the 2D-BPCMs $\boldsymbol{\Omega}_k$.
The statistical CSI or the matrix $\boldsymbol{\Omega}_k$ can be exploited to schedule UTs and improve the estimation
performance of instantaneous CSI, which will bring significant system performance gain. Thus,
it is very important obtain the statistical CSI or the matrix $\boldsymbol{\Omega}_k$.

To estimate the matrices $\boldsymbol{\Omega}_k$, we can use the received pilot signals.
We now consider the statistical CSI estimation based on the uplink pilot transmission. We use the pilot signal sequence in \cite{3GPP} as
\begin{align}
\bx_{q,p} = \tilde{\bx}_q \odot \bb_r(\tau_{r,pN_f}),
\end{align}%2021.11.5%
where $q$ and $p$ denote the root coefficient and cyclic shift, respectively. The sequence $\tilde{\bx}_q$ is the Zadoff-Chu (ZC) sequence with root coefficient $q$, which is specifically represented as \cite{3GPP}
\begin{align}
[\tilde{\bx}_q]_n=\expb{-j\dfrac{\pi q n(n+1)}{N_l}},\ n=0,...,M_p-1,
\end{align}
where $N_l$ is the largest prime number such that $N_l<M_p$.
In particular, the pilots degenerate into orthogonal pilots (OPs) when there is only one root. However, the overhead of orthogonal pilots is relatively large. There might not be enough pilot resource for the orthogonal pilots as the number of the users increase in the massive MIMO systems. Thus, we use the nonorthogonal pilots in  \cite{3GPP} to schedule more UTs in a OFDM symbol. We denote the number of roots and the number of UTs on the $q$-th root as $Q$ and $P_{q}$. Let the matrices $\mathbf{X}_{q,p},\ {\tilde{\mathbf{X}}_q}$ and ${\bB_r(\tau)}$ denote $\diag{\bx_{q,p}},\ \diag{\tilde{\mathbf{x}}_q}$ and  $\diag{\bb_r(\tau)}$, respectively. We use the subscript $q,p$ to replace $k$ for convenience.
The received pilot signal $\mathbf{Y}_t \in \mathbb{C}^{M_r \times M_p}$ at the BS for the $t$-th OFDM symbol is given by
    \begin{equation}\label{eq:receiver_model1}
        \mathbf{Y}_t= \sumK \mathbf{H}_{k,t}\mathbf{X}_{k} + \mathbf{Z}_t = \sum_{q=1}^{Q} \left( \sum_{p=1}^{P_{q}}\bH_{q,p,t}\bB_r(\tau_{r,pN_f}) \right)\tilde{\bX}_{q} + \mathbf{Z}_t,
    \end{equation}
where $\mathbf{Z}_t$ is a complex Gaussian noise matrix consisting of independent and identically distributed (i.i.d.) elements with zero mean and variance $\sigma_z^2$. Substituting the channel model (\ref{eq:channel_matrix_correlation_model}) into (\ref{eq:receiver_model1}), we can obtain that
\begin{align} 
\mathbf{Y}_t &= \sum_{q=1}^{Q} \left( \sum_{p=1}^{P_{q}}\bV\bG_{q.p,t}\bU_f^T\bB_r(\tau_{r,pN_f}) \right)\tilde{\bX}_{q} + \mathbf{Z}_t \notag \\ 
&= \sum_{q=1}^{Q} \left( \sum_{p=1}^{P_{q}}\bV\bG_{q,p,t}\bI_{N_f,N_p}\bU^T\bB_r(\tau_{r,pN_f}) \right)\tilde{\bX}_{q} + \mathbf{Z}_t \notag\\ 
&\overset{\small{(a)}}{=} \sum_{q=1}^{Q}  \bV \left(\sum_{p=1}^{P_{q}}\bG_{q,p,t}\bI_{N_f,N_p}\bPi_{N_p}^{(p-1)N_f}\right)   \bU^T  \tilde{\bX}_{q} + \mathbf{Z}_t,  
\end{align}
where the proof of step $(a)$ can be performed
in a similar way as in \cite{You16Channel}. Let the matrix $\tilde{\bG}_{q,t}$ be defined as  
\begin{equation}
\tilde{\bG}_{q,t} = \sum_{p=1}^{P_{q}}\bG_{q,p,t}\bI_{N_f,N_p}\bPi_{N_p}^{(p-1)N_f}.
\end{equation}
The receive model becomes
\begin{align} 
\mathbf{Y}_t = \sum_{q=1}^{Q}  \bV \tilde{\bG}_{q,t}   \bU^T  \tilde{\bX}_{q} + \mathbf{Z}_t, 
\end{align}
and we can also obtain that
\begin{align}
\tilde{\bG}_{q,t}=\left[ \bG_{q,1,t}~\bG_{q,2,t}~\cdots~\bG_{q,P_q,t}~ \bzero \right]\in\mathbb{C}^{N_r\times N_p}.
\end{align}
Let the matrices $\mathbf{G}_t$ and $\mathbf{P}$ be defined as
\begin{align}
&\mathbf{G}_t=[\tilde{\mathbf{G}}_{1,t} ~ \tilde{\mathbf{G}}_{2, t} ~ \cdots ~ \tilde{\mathbf{G}}_{Q, t} ],\\  
&\mathbf{P}=[\tilde{\mathbf{X}}_{1}^T\bU ~ \tilde{\mathbf{X}}_{2}^T\bU ~ \cdots ~ \tilde{\mathbf{X}}_{Q}^T\bU ]^T.
\end{align} 
We can rewrite the receive model as
\begin{align}\label{eq:receive model}
\mathbf{Y}_t  
 = \bV\bG_t\bP + \mathbf{Z}_t.
\end{align}
For convenience, we also define  
\begin{align}
&\tilde{\bM}_q = \left[ \bM_{q,1}~~\bM_{q,2}~~\cdots~~\bM_{q,P_q}~~ \bzero \right]\in\mathbb{C}^{N_r\times N_p}, \\
&\tilde{\bOmega}_q = \left[ \bOmega_{q,1}~~\bOmega_{q,2}~~\cdots~\bOmega_{q,P_q}~~ \bzero \right]\in\mathbb{C}^{N_r\times N_p}, \\
&\bM = \left[ \tilde{\bM}_1~~\tilde{\bM}_2~~\cdots~~\tilde{\bM}_Q \right], \\
&\bOmega = \left[ \tilde{\bOmega}_1~~\tilde{\bOmega}_2~~\cdots~~\tilde{\bOmega}_Q \right].
\end{align}

In the receive model \eqref{eq:receive model}, the matrix $\bV$ and $\bP$ are deterministic matrices. The unknown
statistical parameter is the matrix $\bM$ or equivalently the matrix $\boldsymbol{\Omega}$. Since the statistical CSI varies
slowly and remains the same over a short period of time, we can use multiple received pilot signals to estimate it. Thus, the problem is to estimate
$\boldsymbol{\Omega}$ base on multiple receive pilot signals $\mathbf{Y}_t$. When there is no noise, the problem can also be
viewed as a MMV problem \cite{cotter2005sparse}. Multiple instantaneous beam domain channel coefficients $\mathbf{G}_t$ can be
obtained by using the M-FOCUSS method \cite{cotter2005sparse} and then be used to obtain the statistical $\boldsymbol{\Omega}$. However,
we need to consider the noise here. Furthermore, the complexity of the M-FOCUSS method is also not satisfied because it needs to estimate a large number of instantaneous channel matrices simultaneously  and the dimension of the MMV problem becomes extremely large as the number of antennas and frequency carriers increases. Thus, we need a new method to
estimate the statistical $\boldsymbol{\Omega}$ directly.

\section{Estimation of Beam Domain Power Matrices}
\label{sec:obtain_CCM_using_CS_method_2}
\subsection{2D-BDCPM acquisition algorithm based on KL divergence minimization}
Since the statistical parameters to be estimated are on the two dimensional angle-delay domain, it is natural to convert the received pilot signals into   signals on the angle-delay domain.
By left multiplying $\mathbf{Y}_t$ with $\mathbf{V}^H$ and right multiplying it with $\mathbf{P}^H$, we obtain the received pilot signal   on the angle-delay domain  as
       \begin{equation}
        \mathbf{V}^H\mathbf{Y}_t \mathbf{P}^H=  \mathbf{V}^H\mathbf{V}\mathbf{G}_{t}\mathbf{P}\mathbf{P}^H + \mathbf{V}^H\mathbf{Z}_t \mathbf{P}^H.
    \end{equation}
 
The relation between the received pilot signal and the 2D-BDCPM is not clear yet.
To figure out their relation, we present the following theorem that shows one important property of the 2D-BDCM ${\mathbf{G}}_{t}$.
\begin{theorem}
\label{th:theorem_channel_power_matrices_relation}
The 2D-BDCM ${\mathbf{G}}_{t}$ satisfies
\begin{equation}
        \mathbb{E}\{(\mathbf{C}_1{\mathbf{G}}_{t}\mathbf{C}_2) \odot (\mathbf{C}_1{\mathbf{G}}_{t}\mathbf{C}_2) ^*\}=  \mathbf{T}_1\boldsymbol{\Omega}\mathbf{T}_2,
\end{equation}
where $\mathbf{C}_1$ and $\mathbf{C}_2$ are constant matrices, $\mathbf{T}_1=\mathbf{C}_1 \odot \mathbf{C}_1^*$ and $\mathbf{T}_2 = \mathbf{C}_2 \odot \mathbf{C}_2^*$.
\end{theorem}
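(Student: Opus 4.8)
The plan is to establish the identity entrywise. Setting $\bA=\bC_1\bG_t\bC_2$ and expanding $[\bA]_{ij}=\sum_{k,l}[\bC_1]_{ik}[\bG_t]_{kl}[\bC_2]_{lj}$, the $(i,j)$-th entry of the left-hand side becomes, since $\bC_1$ and $\bC_2$ are deterministic,
\begin{equation}
\mean{\big|[\bA]_{ij}\big|^2}=\sum_{k,l}\sum_{k',l'}[\bC_1]_{ik}[\bC_1]_{ik'}^*[\bC_2]_{lj}[\bC_2]_{l'j}^*\,\mean{[\bG_t]_{kl}[\bG_t]_{k'l'}^*}.
\end{equation}
Thus everything reduces to computing the second-order statistics $\mean{[\bG_t]_{kl}[\bG_t]_{k'l'}^*}$ of the entries of the 2D-BDCM.

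The key step, which I expect to be the main obstacle, is to show that the entries of $\bG_t$ are pairwise uncorrelated with the variance of the $(k,l)$-th entry equal to $[\bOmega]_{kl}$, that is,
\begin{equation}
\mean{[\bG_t]_{kl}[\bG_t]_{k'l'}^*}=[\bOmega]_{kl}\,\delta_{kk'}\delta_{ll'}.
\end{equation}
This follows from the block structure $\bG_t=[\tilde{\bG}_{1,t}~\cdots~\tilde{\bG}_{Q,t}]$ with $\tilde{\bG}_{q,t}=[\bG_{q,1,t}~\cdots~\bG_{q,P_q,t}~\bzero]$ and $\bG_{q,p,t}=\bM_{q,p}\odot\bW_{q,p,t}$: each column block of $\bG_t$ is either identically zero or carries the entries of a single $\bW_{q,p,t}$. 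Since the $\bW_{q,p,t}$ are mutually independent across distinct user indices $(q,p)$ and have i.i.d., zero-mean, unit-variance entries within each block, two entries at different positions always involve either independent or distinct zero-mean factors and hence decorrelate, whereas a coincident term yields $\mean{\big|[\bM_{q,p}]_{mn}[\bW_{q,p,t}]_{mn}\big|^2}=\big|[\bM_{q,p}]_{mn}\big|^2=[\bOmega_{q,p}]_{mn}$. The delicate part is the index bookkeeping that maps a global position $(k,l)$ of $\bG_t$ to its $(q,p)$ block and local entry, together with the observation that the zero-padded columns of $\bG_t$ line up exactly with the zero-padded columns of $\bOmega$, so the identity holds trivially on those positions.

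Substituting this relation back collapses the quadruple sum through $\delta_{kk'}\delta_{ll'}$, leaving
\begin{equation}
\mean{\big|[\bA]_{ij}\big|^2}=\sum_{k,l}\big|[\bC_1]_{ik}\big|^2\,[\bOmega]_{kl}\,\big|[\bC_2]_{lj}\big|^2.
\end{equation}
Recognizing $\big|[\bC_1]_{ik}\big|^2=[\bC_1\odot\bC_1^*]_{ik}=[\bT_1]_{ik}$ and $\big|[\bC_2]_{lj}\big|^2=[\bT_2]_{lj}$, the right-hand side is precisely $[\bT_1\bOmega\bT_2]_{ij}$. Since $i$ and $j$ are arbitrary, this establishes $\mean{(\bC_1\bG_t\bC_2)\odot(\bC_1\bG_t\bC_2)^*}=\bT_1\bOmega\bT_2$.
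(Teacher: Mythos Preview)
Your proof is correct and follows essentially the same route as the paper: entrywise expansion of $\bC_1\bG_t\bC_2$, reduction via $\mean{[\bG_t]_{kl}[\bG_t]_{k'l'}^*}=[\bOmega]_{kl}\delta_{kk'}\delta_{ll'}$, and recognition of the resulting double sum as $[\bT_1\bOmega\bT_2]_{ij}$. The only difference is that you spend more effort justifying the uncorrelatedness of the entries of $\bG_t$ through its block structure, whereas the paper simply invokes it as a direct consequence of the modeling assumptions.
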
 
\begin{proof}
	The proof is provided in Appendix \ref{appendice: theorem_channel_power_matrices_relation}.
\end{proof}

Theorem \ref{th:theorem_channel_power_matrices_relation} provides an important property of the product of one random matrix with independent entries and two deterministic matrices. 
Let $\mathbf{\Phi}$ denote the expectation of the receive power matrix on the angle-delay domain as
\begin{equation}
    \mathbb{E}\{[(\mathbf{V}^H\mathbf{Y}\mathbf{P}^H) \odot (\mathbf{V}^H\mathbf{Y}\mathbf{P}^H)^*]\}. \nonumber
\end{equation}
From Theorem \ref{th:theorem_channel_power_matrices_relation}, we can obtain the receive model of the statistical parameter $\boldsymbol{\Omega}$ as
\begin{IEEEeqnarray}{Cl}
         \mathbf{\Phi}  = \mathbf{T}_{a}\boldsymbol{\Omega} \mathbf{T}_{f} +  \mathbf{N},
         \label{eq:statistical_CSI_equation}
\end{IEEEeqnarray}
where $\mathbf{T}_{a}$, $\mathbf{T}_{f}$ and $\mathbf{N}$ are deterministic matrices defined as 
\begin{align}
\label{eq:Ta}\mathbf{T}_{a}&=(\mathbf{V}^H\mathbf{V})  \odot (\mathbf{V}^H\mathbf{V})^*,\\
\label{eq:Tf}\mathbf{T}_{f}&=(\bP\bP^H)\odot(\bP\bP^H)^*\notag \\
&=\begin{bmatrix}
\left(\mathbf{U}^T\tilde{\bX}_{1}\tilde{\bX}_{1}^H\mathbf{U}^*\right)\odot \left(\mathbf{U}^T\tilde{\bX}_{1}\tilde{\bX}_{1}^H\mathbf{U}^*\right)^* & \cdots & \left(\mathbf{U}^T\tilde{\bX}_{1}\tilde{\bX}_{Q}^H\mathbf{U}^*\right)\odot \left(\mathbf{U}^T\tilde{\bX}_{1}\tilde{\bX}_{Q}^H\mathbf{U}^*\right)^* \\
\vdots & \ddots &\vdots \\
\left(\mathbf{U}^T\tilde{\bX}_{Q}\tilde{\bX}_{1}^H\mathbf{U}^*\right)\odot \left(\mathbf{U}^T\tilde{\bX}_{Q}\tilde{\bX}_{1}^H\mathbf{U}^*\right)^* & \cdots & \left(\mathbf{U}^T\tilde{\bX}_{Q}\tilde{\bX}_{Q}^H\mathbf{U}^*\right)\odot \left(\mathbf{U}^T\tilde{\bX}_{Q}\tilde{\bX}_{Q}^H\mathbf{U}^*\right)^* 
\end{bmatrix} , 
\\
\mathbf{N}&=(\mathbf{V}^H\odot  \mathbf{V}^T)\mathbb{E}\{\mathbf{Z} \odot \mathbf{Z}^*\}(\mathbf{P}^H\odot  \mathbf{P}^T) = {M_tM_p\sigma_z^2}\bone.
\end{align} 

To estimate the 2D-BDCPM, i.e., the matrix $\boldsymbol{\Omega}$, from $\mathbf{\Phi}$, we can solve \eqref{eq:statistical_CSI_equation} directly to obtain an exact solution.
However, it is not an easy task since the elements of $\boldsymbol{\Omega}$ need be nonnegative. Even more,  \eqref{eq:statistical_CSI_equation}  might not have an exact solution.
Thus, we propose to construct an optimization problem to obtain approximate solution.
To achieve this goal, we need an objective function first.
This means that we need to choose a type of divergence or distance  between the matrix $\mathbf{\Phi}$ and the sum $\mathbf{T}_{a}\boldsymbol{\Omega}\mathbf{T}_{f} +  \mathbf{N}$ according to \eqref{eq:statistical_CSI_equation}. Since both the matrices are matrices with positive real elements, the Kullback--Leibler (KL) divergence between sequences of positive real elements is a natural option and used as the objective function.

Meanwhile, optimizing $\boldsymbol{\Omega}$ directly is still complicate since it has the constraint that its elements
are nonnegative.  
From the relation $\boldsymbol{\Omega}=\bM \odot \bM$, we know that estimating the matrix $\bM$ is equivalent to
estimating the matrix $\boldsymbol{\Omega}$. Thus, we choose to optimize $\mathbf{M}$ instead of $\boldsymbol{\Omega}$ because it has no constraint. 
We define the function $f(\bM)$ as the KL divergence between the matrices $\mathbf{\Phi}$ and $\mathbf{T}_{a}\boldsymbol{\Omega}\mathbf{T}_{f}+\mathbf{N}$, \textit{i.e.}, \cite{amari2016information}
\begin{align}
f(\bM) &= \sum_{ij}[\mathbf{\Phi}]_{ij}\log\frac{[\mathbf{\Phi}]_{ij}}{ [\mathbf{T}_{a}\boldsymbol{\Omega}\mathbf{T}_{f}+ \mathbf{N}]_{ij}}   + \sum_{ij} [\mathbf{T}_{a}\boldsymbol{\Omega}\mathbf{T}_{f}+ \mathbf{N}]_{ij}   - \sum_{ij} [\mathbf{\Phi}]_{ij}.
\end{align}
Using the KL divergence $f(\bM)$, we are now able to formulate an unconstrained optimization problem as
\begin{align}	\label{eq:optimization_problem_of_channel_power_matrices}
	\bM^{\star} &= \arg\min\limits_{\bM} f(\bM).
\end{align}

To solve the optimization problem in \eqref{eq:optimization_problem_of_channel_power_matrices},  the gradient method \cite{bertsekas1997nonlinear} can be used. Thus, we calculate the gradient of $f(\bM) $ with respect to $\bM$ first.
In the following theorem, we provide the gradients of two items in $f(\bM)$.
\begin{theorem}
	\label{th:theorem_of_gradients}
	The gradients of  two items in $f(\bM)$ can be obtained as 
	\begin{align}\label{eq:gradients of two items}
	&\frac{\partial\sum_{ij}  [\mathbf{T}_{a}(\mathbf{M} \odot \mathbf{M})\mathbf{T}_{f}]_{ij}}{\partial {\mathbf{M}}}
	= 2\left( \mathbf{T}_{a}\bone\mathbf{T}_{f} \right) \odot  \mathbf{M},
	\\
	& \frac{\partial \sum_{ij}[\mathbf{\Phi}]_{ij}\log \left[\mathbf{T}_{a}(\mathbf{M} \odot \mathbf{M})\mathbf{T}_{f}+ \mathbf{N} \right]_{ij}}{ \partial \mathbf{M}}
	=  2\left(\mathbf{T}_{a}\mathbf{Q}\mathbf{T}_{f}\right) \odot \mathbf{M},
	\end{align}	 
where the matrix $\mathbf{Q}$ is the Hadmard division of two matrices with the elements being defined as 
\begin{equation}
[\mathbf{Q}]_{ij}=\frac{[\mathbf{\Phi}]_{ij}}{[\mathbf{T}_{a}\boldsymbol{\Omega}\mathbf{T}_{f}+  \mathbf{N}]_{ij}}.
\end{equation}
\end{theorem}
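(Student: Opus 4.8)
The plan is to prove both identities entrywise, working in index notation and treating $\bM$ as a real matrix (which is justified, since $\bOmega=\bM\odot\bM$ is to be the nonnegative real power matrix, and the whole point of optimizing $\bM$ is that it is unconstrained and real). The only elementary differentiation fact needed is $\frac{\partial [\bM]_{kl}^2}{\partial [\bM]_{ab}}=2[\bM]_{ab}\delta_{ka}\delta_{lb}$, so everything reduces to expanding the triple product $\bT_a\bOmega\bT_f$ as a sum and picking out the surviving term. Before starting I would record the one structural fact that makes the final forms clean: $\bT_a$ and $\bT_f$ are real symmetric. Indeed, $\bV^H\bV$ and $\bP\bP^H$ are Hermitian, so $[\bT_a]_{kl}=[\bV^H\bV]_{kl}\,\overline{[\bV^H\bV]_{kl}}=\abs{[\bV^H\bV]_{kl}}^2$ and likewise for $\bT_f$; both are therefore symmetric with nonnegative real entries. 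This symmetry is exactly what will let me collapse the transposes that appear at the end.

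For the first identity, I would write $[\bT_a\bOmega\bT_f]_{ij}=\sum_{k,l}[\bT_a]_{ik}[\bM]_{kl}^2[\bT_f]_{lj}$, sum over $i$ and $j$, and factor the independent sums to get $\sum_{kl}[\bM]_{kl}^2\big(\sum_i[\bT_a]_{ik}\big)\big(\sum_j[\bT_f]_{lj}\big)$. Differentiating with respect to $[\bM]_{ab}$ selects $k=a,\ l=b$ and produces $2[\bM]_{ab}\big(\sum_i[\bT_a]_{ia}\big)\big(\sum_j[\bT_f]_{bj}\big)$, i.e. the $a$-th column sum of $\bT_a$ times the $b$-th row sum of $\bT_f$. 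Observing that $[\bT_a\bone\bT_f]_{ab}=\big(\sum_k[\bT_a]_{ak}\big)\big(\sum_l[\bT_f]_{lb}\big)$ is the $a$-th row sum of $\bT_a$ times the $b$-th column sum of $\bT_f$, the symmetry of $\bT_a$ and $\bT_f$ equates the two, giving $\frac{\partial g_1}{\partial[\bM]_{ab}}=2\,[(\bT_a\bone\bT_f)\odot\bM]_{ab}$, which is the claimed $2(\bT_a\bone\bT_f)\odot\bM$.

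For the second identity, set $A_{ij}=[\bT_a\bOmega\bT_f+\bN]_{ij}$ (note $\bN$ is constant, so it drops out under differentiation). The chain rule gives $\frac{\partial}{\partial[\bM]_{ab}}\sum_{ij}[\bPhi]_{ij}\log A_{ij}=\sum_{ij}\frac{[\bPhi]_{ij}}{A_{ij}}\frac{\partial A_{ij}}{\partial[\bM]_{ab}}$, and from the first part $\frac{\partial A_{ij}}{\partial[\bM]_{ab}}=2[\bM]_{ab}[\bT_a]_{ia}[\bT_f]_{bj}$. Recognizing $[\bQ]_{ij}=[\bPhi]_{ij}/A_{ij}$, the sum becomes $2[\bM]_{ab}\sum_{ij}[\bT_a]_{ia}[\bQ]_{ij}[\bT_f]_{bj}=2[\bM]_{ab}[\bT_a^T\bQ\bT_f^T]_{ab}$, and symmetry turns this into $2[(\bT_a\bQ\bT_f)\odot\bM]_{ab}$. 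The computation is not deep, so the main obstacle is purely bookkeeping: tracking which summations become row versus column sums and which matrix products come out transposed. Invoking the symmetry of $\bT_a$ and $\bT_f$ at the right moment is what removes those transposes and delivers the two stated identities in their compact form; without it one would be left with $\bT_a^T\bone\bT_f^T$ and $\bT_a^T\bQ\bT_f^T$ instead.
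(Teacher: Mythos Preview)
Your proposal is correct and follows essentially the same route as the paper: both expand $[\bT_a(\bM\odot\bM)\bT_f]_{ij}$ in index notation, differentiate $[\bM]_{kl}^2$ entrywise, collect the surviving terms into a matrix product that initially carries transposes, and then invoke the symmetry of $\bT_a$ and $\bT_f$ to remove them. The only cosmetic difference is that the paper first computes $\partial[\bT_a\bOmega\bT_f]_{ij}/\partial\bM$ as a matrix (via the $\be_i\be_j^T$ device) and then sums over $i,j$, whereas you sum first and differentiate a scalar; the two orderings are equivalent by linearity.
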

\begin{proof}
	The proof is provided in Appendix \ref{appendix:theorem of gradients}.
\end{proof}

From Theorem \ref{th:theorem_of_gradients} and the gradients of  other items in $f(\bM)$ with respect to $\bM$ are zeros,  we obtain the gradient of the function $f(\bM)$ as
\begin{align}\label{eq:gradient of f}
	 \frac{\partial f(\bM)}{\partial \bM} &= 2(\mathbf{T}_{a}\bone\mathbf{T}_{f}) \odot  \mathbf{M} - 2(\mathbf{T}_{a}\mathbf{Q}\mathbf{T}_{f}) \odot \mathbf{M} \notag \\
	 &= 2(\mathbf{T}_{a}(\bone-\mathbf{Q})\mathbf{T}_{f}) \odot  \mathbf{M}.
\end{align}
With the obtained gradient, we can apply the gradient method to obtain 
the optimal $\mathbf{M}$ as
\begin{IEEEeqnarray}{Cl}
	\mathbf{M}^{d+1}  = \mathbf{M}^{d}  - \delta^d \frac{\partial f(\bM^d)}{\partial \bM^d},
	\label{eq:iterative_equation_to_obtain_Mk}
\end{IEEEeqnarray}    
where the superscript $d$ represents the iteration number and $\delta^d$ is the step size which can be obtained by the line search method \cite{boyd2009convex}. Recall that $\mathbf{\Phi}$ denotes $\mathbb{E}\{[(\mathbf{V}^H\mathbf{Y}\mathbf{P}^H) \odot (\mathbf{V}^H\mathbf{Y}\mathbf{P}^H)^*]\}$.
Thus, it is not possible to obtain the matrix $\mathbf{\Phi}$ directly in practice. Instead, we use the sample
average $\sum_{t=1}^T\frac{1}{T}\{[(\mathbf{V}^H\mathbf{Y}_t\mathbf{P}^H) \odot (\mathbf{V}^H\mathbf{Y}_t\mathbf{P}^H)^*]\}$, where $T$ is the number of samples.  The obtained algorithm is summarized as Algorithm \ref{alg:KL}.
\begin{algorithm}[htb]
	\caption{2D-BDCPM acquisition algorithm based on KL divergence minimization}
	\label{alg:KL}
	\begin{algorithmic}[1]
		\State Use $T$ received pilot signals to calculate  $\bPhi$ as $\frac{1}{T}\sum_{t=1}^T{ (\bV^H\bY_t\bP^H)\odot(\bV^H\bY_t\bP^H)^* }$
		\State \textbf{Initialization}: set $d=0$ and the maximum number of iterations as $D$, select appropriate $\delta^0$, $\delta_{min}<\delta^0$ and $\alpha\in (0,1)$, and initialize $\bOmega^0=\frac{1}{QN_rN_p}\bPhi$ and $\bM^0=\sqrt{\bOmega}$
		\State \textbf{Repeat} 
		\State \qquad Calculate the gradient of $ f(\bM^d)$ as $
		\frac{\partial f(\bM^d)}{\partial \bM^d} = 2(\mathbf{T}_{a}(\bone-\mathbf{Q})\mathbf{T}_{f}) \odot  \mathbf{M}^d $
		\State \qquad \textbf{while} $\delta^d>\delta_{min}$ 
		\State \qquad\qquad Update $\mathbf{M}^{d+1}  = \mathbf{M}^{d}  - \delta^d \frac{\partial f(\bM^d)}{\partial \bM^d}$
		\State \qquad\qquad \textbf{if} $f(\bM^{d+1})\ge f(\bM^{d})$ {then} $\delta^d=\alpha\delta^d$ and $\bM^{d+1}=\bM^d$
		\State \qquad\qquad \textbf{else} {break}
		\State \qquad Set $d = d+1$. 
		\State \textbf{until} $\delta^d\le \delta_{min}$ or $d=D$
		\State Calculate $\bOmega^d=\bM^d\odot\bM^d$
	\end{algorithmic}
\end{algorithm}

Since the computational complexity of  products is much higher than that of  additions, we use the number of the complex products as the computational complexity. Then, the complexity of Algorithm 1 is
dominate by the matrix product $\bT_a\bQ\bT_f$ in (\ref{eq:gradient of f}), whose complexity is $O\left(QN_rN_p(N_r+QN_p)\right)$. 
The complexity is much lower than that of the M-FOCUSS algorithm, which is of order $O((QN_rN_p)^3)$. 
However, the complexity of Algorithm 1 is still not satisfied for the 2D-BSCM since $QN_rN_p(N_r+QN_p)$ is still very large. Thus, we need to further reduce the complexity of the proposed algorithm. 

\subsection{Low-complexity 2D-BDCPM acquisition algorithm}
In the previous subsection, we provide a receive model that can be utilized to estimate the
beam domain channel power matrix $\boldsymbol{\Omega}$.  However, the dimensions of $\mathbf{T}_{a}$ and $\mathbf{T}_{f}$ are too large such that the matrix product  $\mathbf{T}_{a}\boldsymbol{\Omega} \mathbf{T}_{f}$ will cause high computational complexity. To reduce the computational complexity of Algorithm \ref{alg:KL},  the structure of $\mathbf{T}_{a}$ and $\mathbf{T}_{f}$ provided in the following theorem and corollary can be utilized.

%Let $\tilde{\mathbf{x}}_k=[\mathbf{x}_k^T ~ \mathbf{0}^T]^T$. $\mathbf{F}_f$ is the DFT matrix, $\mathbf{F}_a$ is the Kronecker product of DFT matrices,  \begin{IEEEeqnarray}{Cl}
%    {\mathbf{d}}  = [1~1~\cdots~1~0~\cdots~0]^T
%\end{IEEEeqnarray}
%and
%\begin{IEEEeqnarray}{Cl}
%    {\mathbf{f}}  = [ {\mathbf{d}}^T~ {\mathbf{d}}^T~\cdots~ {\mathbf{d}}^T~0~\cdots~0]^T
%\end{IEEEeqnarray}

\begin{theorem}\label{th:theorem of Toeplitz}
	Let the matrix $\bA=\bI_{M,N}\bF_{N}$ be an oversampled DFT matrix and $\bD=\diag{\bd}$ be an $M$-dimensional diagonal matrix. Then, $(\bA^H\bD\bA)\odot(\bA^H\bD\bA)^*$ is a circulant matrix, given by
	\begin{align}
	(\bA^H\bD\bA)\odot(\bA^H\bD\bA)^*=\bF_N^H\bLambda\bF_N,
	\end{align}
	where $\bLambda$ is diagonal matrix defined as
	\begin{align}
	&\bLambda = \frac{1}{N}\diag{\bF_N \left((\bF_N^H\tilde{\bd})\odot(\bF_N^H\tilde{\bd})^*\right) },\\
	&\tilde{\bd}=[\bd^T~\bzero_{N-M,1}^T]^T .
	\end{align}
\end{theorem}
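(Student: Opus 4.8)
The plan is to reduce $\bA^H\bD\bA$ to a matrix that is diagonalized by the unitary $\bF_{N}$ and then to exploit that circulant structure survives both conjugation and the Hadamard product. First I would substitute $\bA=\bI_{M,N}\bF_{N}$ and use $\bI_{M,N}^H=\bI_{N,M}$ to obtain
\begin{equation}
\bA^H\bD\bA=\bF_{N}^H\,\bI_{N,M}\bD\,\bI_{M,N}\,\bF_{N}=\bF_{N}^H\tilde{\bD}\bF_{N},
\end{equation}
where the middle factor $\bI_{N,M}\bD\,\bI_{M,N}$ merely embeds the $M\times M$ diagonal $\bD$ into the top-left block of the $N\times N$ zero-padded diagonal matrix $\tilde{\bD}=\diag{\tilde{\bd}}$ with $\tilde{\bd}=[\bd^T~\bzero_{N-M,1}^T]^T$. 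Since $\bF_{N}$ is unitary, this representation already exhibits $\bA^H\bD\bA$ as a circulant matrix whose eigenvalues are exactly the entries of $\tilde{\bd}$.

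Next I would establish that the Hadamard product in the statement is again circulant. The key observation is that any matrix of the form $\bF_{N}^H\bGamma\bF_{N}$ with $\bGamma$ diagonal has $(p,q)$ entry $\tfrac1N\sum_{n}[\bGamma]_{nn}\,e^{-j2\pi n(q-p)/N}$, i.e.\ a function of $(q-p)\bmod N$ only, which is precisely the defining property of a circulant. Both $\bA^H\bD\bA$ and its entrywise conjugate $(\bA^H\bD\bA)^*$ share this property, and the entrywise product of two functions of $(q-p)$ is again a function of $(q-p)$. Hence $(\bA^H\bD\bA)\odot(\bA^H\bD\bA)^*$ is circulant and therefore admits a factorization $\bF_{N}^H\bLambda\bF_{N}$ for some diagonal $\bLambda$, which proves the first assertion.

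It then remains to identify $\bLambda$. I would use two standard facts about circulants diagonalized by the unitary DFT: (i) the eigenvalue vector is a scaled DFT of the first column, and (ii) the first column of a Hadamard product of matrices is the Hadamard product of the first columns. From $\bA^H\bD\bA=\bF_{N}^H\tilde{\bD}\bF_{N}$ the first column is proportional to $\bF_{N}^H\tilde{\bd}$; squaring magnitudes entrywise gives that the first column of $(\bA^H\bD\bA)\odot(\bA^H\bD\bA)^*$ is proportional to $(\bF_{N}^H\tilde{\bd})\odot(\bF_{N}^H\tilde{\bd})^*$, and applying $\bF_{N}$ with the appropriate scaling yields $\bLambda=\tfrac1N\diag{\bF_{N}\left((\bF_{N}^H\tilde{\bd})\odot(\bF_{N}^H\tilde{\bd})^*\right)}$, as claimed.

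I expect the main hazard to be purely in the bookkeeping of normalization constants: one must carefully propagate the $1/\sqrt{N}$ factors carried by the unitary $\bF_{N}$ through the first-column/eigenvalue correspondence and through the entrywise squaring so that the final constant multiplying the diagonal comes out exactly. The conceptual content—that $\bA^H\bD\bA$ is circulant, and that this structure is preserved under conjugation and under the Hadamard product—is routine once the representation $\bF_{N}^H\tilde{\bD}\bF_{N}$ is in hand; the one step worth stating precisely is the claim that the Hadamard product of two circulants is circulant, which is cleanest to verify at the level of the index difference $(q-p)\bmod N$.
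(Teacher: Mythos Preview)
Your proposal is correct and follows essentially the same route as the paper: show $\bA^H\bD\bA$ is circulant, argue that circulant structure is preserved under conjugation and the Hadamard product, and then read off $\bLambda$ from the first column. The only cosmetic difference is that the paper carries out the Hadamard-product step via the cyclic-shift expansion $\bA^H\bD\bA=\sum_i[\bc]_i\bPi_N^{N-i}$ together with $\bPi_N^{i}\odot\bPi_N^{j}=\delta(i-j)\bPi_N^{i}$, whereas you use the equivalent ``entries depend only on $(q-p)\bmod N$'' characterization; and your caution about tracking the $1/\sqrt{N}$ factors is well placed, since that is the one place the computation can slip.
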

\begin{proof}
	The proof is provided in Appendix \ref{appendice: theorem of Toeplitz}
\end{proof}
Theorem \ref{th:theorem of Toeplitz} is obtained based on the properties of the circulant matrices\cite{circulantmatrices}.
From Theorem \ref{th:theorem of Toeplitz}, we then obtain the following corollary.
\begin{corollary} \label{corol:structure of Ta and Tf}
The matrices $\mathbf{T}_{a}$ and $\mathbf{T}_{f}$ can be written as
\begin{align}
       \bT_a &= (\mathbf{F}_{N_z}\otimes \mathbf{F}_{N_x})^H(\mathbf{\Lambda}_z\otimes\mathbf{\Lambda}_x)(\mathbf{F}_{N_z}\otimes \mathbf{F}_{N_x}) ,
      \\
      \bT_f&= (\bI_{Q}\otimes\mathbf{F}_{N_p})^H\mathbf{\Sigma}(\bI_{Q}\otimes\mathbf{F}_{N_p}),
\end{align}
where $\mathbf{\Lambda}_v$ and $\bLambda_h$ are diagonal matrices, defined as 
\begin{align} 
&\bLambda_v = \frac{1}{{N_z}}\diag{\bF_{N_z}\left( (\bF_{N_z}^H\bd_z)\odot(\bF_{N_z}^H\bd_z)^* \right)},\\
&\bLambda_x = \frac{1}{{N_x}}\diag{\bF_{N_x}\left( (\bF_{N_x}^H\bd_x)\odot(\bF_{N_x}^H\bd_x)^* \right)},\\
& \bd_z = \left[ \bone_{M_z,1}^T~ \bzero_{N_z-M_z,1}^T \right]^T ,\\
& \bd_x = \left[ \bone_{M_x,1}^T~ \bzero_{N_x-M_x,1}^T \right]^T ,
\end{align} 
and $\bSigma$ is a block matrix with diagonal matrices being its elements, defined as 
\begin{align}
&{\mathbf{\Sigma}}
= \begin{bmatrix}
\bSigma_{1,1} & \cdots & \bSigma_{1,Q} \\
\vdots & \ddots & \vdots, \\
\bSigma_{Q,1} & \cdots & \bSigma_{Q,Q} 
\end{bmatrix}, \\
&\bSigma_{q_1,q_2} = \frac{1}{{N_p}}\diag{\bF_{N_p}\left( (\bF_{N_p}^H\bd_{q_1,q_2})\odot(\bF_{N_p}^H\bd_{q_1,q_2})^* \right)}, \\
&\bd_{q_1,q_2} = \left[ \left(\tilde{\bx}_{q_1}\odot\tilde{\bx}_{q_2}^*\right)^T~ \bzero_{N_p-M_p,1}^T \right]^T .
\end{align} 

\end{corollary}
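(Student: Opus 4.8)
The plan is to reduce both identities to \thref{th:theorem of Toeplitz} by exploiting the Kronecker/block structure of $\bV$ and $\bP$, together with the elementary Hadamard--Kronecker rule $(\bA\otimes\bB)\odot(\bC\otimes\bD)=(\bA\odot\bC)\otimes(\bB\odot\bD)$ (valid when $\bA,\bC$ and $\bB,\bD$ are conformable) and the Kronecker mixed-product rule $(\bA\otimes\bB)(\bC\otimes\bD)=(\bA\bC)\otimes(\bB\bD)$.

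For $\mathbf{T}_a$, I would first apply the mixed-product rule to $\bV=\mathbf{V}_z\otimes\mathbf{V}_x$ to get $\bV^H\bV=(\mathbf{V}_z^H\mathbf{V}_z)\otimes(\mathbf{V}_x^H\mathbf{V}_x)$, and then use the Hadamard--Kronecker rule (noting that conjugation distributes over $\otimes$) to obtain
\[
\mathbf{T}_a=\big[(\mathbf{V}_z^H\mathbf{V}_z)\odot(\mathbf{V}_z^H\mathbf{V}_z)^*\big]\otimes\big[(\mathbf{V}_x^H\mathbf{V}_x)\odot(\mathbf{V}_x^H\mathbf{V}_x)^*\big].
\]
Since $\mathbf{V}_z=\bI_{M_z,N_z}\mathbf{F}_{N_z}$ and $\mathbf{V}_x=\bI_{M_x,N_x}\mathbf{F}_{N_x}$ are oversampled DFT matrices, each bracket is exactly of the form treated in \thref{th:theorem of Toeplitz} with the diagonal matrix there taken to be the identity, so that $\bd_z=[\bone_{M_z,1}^T~\bzero^T]^T$ and $\bd_x=[\bone_{M_x,1}^T~\bzero^T]^T$. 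The theorem turns the two brackets into $\mathbf{F}_{N_z}^H\bLambda_z\mathbf{F}_{N_z}$ and $\mathbf{F}_{N_x}^H\bLambda_x\mathbf{F}_{N_x}$ with the stated $\bLambda_z,\bLambda_x$, and a final application of the mixed-product rule repackages their Kronecker product as $(\mathbf{F}_{N_z}\otimes\mathbf{F}_{N_x})^H(\bLambda_z\otimes\bLambda_x)(\mathbf{F}_{N_z}\otimes\mathbf{F}_{N_x})$, which is the claimed expression.

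For $\mathbf{T}_f$ I would first make the block structure explicit. Stacking shows that the $q$-th block-row of $\bP$ is $\bU^T\tilde{\bX}_q$, so the $(q_1,q_2)$ block of $\bP\bP^H$ is $\bU^T\tilde{\bX}_{q_1}\tilde{\bX}_{q_2}^H\bU^*=\bU^T\bD_{q_1,q_2}\bU^*$, where $\bD_{q_1,q_2}=\diag{\tilde{\bx}_{q_1}\odot\tilde{\bx}_{q_2}^*}$ is diagonal; hence the $(q_1,q_2)$ block of $\mathbf{T}_f$ is $(\bU^T\bD_{q_1,q_2}\bU^*)\odot(\bU^T\bD_{q_1,q_2}\bU^*)^*$. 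To bring this under \thref{th:theorem of Toeplitz}, which is stated for $\bA^H\bD\bA$ with $\bA=\bI_{M,N}\mathbf{F}_N$, I would use $\bU^T\bD_{q_1,q_2}\bU^*=(\bU^H\bD_{q_1,q_2}^*\bU)^*$ together with the fact that an entrywise Hadamard square is invariant under entrywise conjugation, so the block equals $(\bU^H\bD_{q_1,q_2}^*\bU)\odot(\bU^H\bD_{q_1,q_2}^*\bU)^*$, which is precisely the theorem's form with $\bA=\bU=\bI_{M_p,N_p}\mathbf{F}_{N_p}$. The theorem then produces $\mathbf{F}_{N_p}^H\bSigma_{q_1,q_2}\mathbf{F}_{N_p}$ for each block, and collecting the blocks and factoring $\bI_Q\otimes\mathbf{F}_{N_p}$ out on the left and right (using that block-diagonal pre- and post-multiplication acts block-wise) yields $\mathbf{T}_f=(\bI_Q\otimes\mathbf{F}_{N_p})^H\bSigma(\bI_Q\otimes\mathbf{F}_{N_p})$.

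I expect the main obstacle to be the bookkeeping of conjugates and transposes in the $\mathbf{T}_f$ blocks: because the pilot sequences $\tilde{\bx}_q$ are complex, the diagonal $\bD_{q_1,q_2}$ is not real, and the passage from $\bU^T\bD_{q_1,q_2}\bU^*$ to the canonical form $\bA^H\bD\bA$ must be handled carefully. Concretely, one must decide whether the generating vector feeding \thref{th:theorem of Toeplitz} is $\bd_{q_1,q_2}$ or its conjugate, and reconcile $(\mathbf{F}_{N_p}^H\bd_{q_1,q_2})\odot(\mathbf{F}_{N_p}^H\bd_{q_1,q_2})^*$ with $(\mathbf{F}_{N_p}\bd_{q_1,q_2})\odot(\mathbf{F}_{N_p}\bd_{q_1,q_2})^*$; this is resolved using the symmetry $\mathbf{F}_{N_p}^T=\mathbf{F}_{N_p}$ and the relation $\mathbf{F}_{N_p}^2=\mathbf{J}$ (the index-reversal permutation), which swaps $\mathbf{F}_{N_p}$ and $\mathbf{F}_{N_p}^H$ together with a flip of the diagonal entries. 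Pinning down these sign and ordering conventions so that the per-block diagonal matches exactly the stated $\bSigma_{q_1,q_2}$ is the delicate part; once that is settled, everything else is a direct application of the two Kronecker identities above.
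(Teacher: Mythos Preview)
Your approach is correct and essentially identical to the paper's: both split $\bT_a$ via the Kronecker/Hadamard identities into two factors and apply \thref{th:theorem of Toeplitz} to each, and both treat $\bT_f$ blockwise and apply \thref{th:theorem of Toeplitz} to each $(q_1,q_2)$ block before factoring out $\bI_Q\otimes\bF_{N_p}$. Your extra care about the $\bU^T\bD\bU^*$ versus $\bA^H\bD\bA$ mismatch is warranted (and your conjugation trick $(\bU^T\bD\bU^*)\odot(\bU^T\bD\bU^*)^*=(\bU^H\bD^*\bU)\odot(\bU^H\bD^*\bU)^*$ is the clean fix); the paper's own proof simply invokes \thref{th:theorem of Toeplitz} on these blocks without addressing this point.
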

\begin{proof}
	The proof is provided in Appendix \ref{appendix:structure of Ta and Tf}.
\end{proof}

Based on the structures of $\mathbf{T}_{a}$ and $\mathbf{T}_{f}$ provided in Corollary 1, we can reduce the complexity of the proposed algorithm.
In the following, we analyze the
complexity of the proposed algorithm after utilizing the structure provided in Corollary 1.

We write the matrix $\bQ$ as $\left[ \bQ_1~\bQ_2~\cdots~\bQ_Q \right]$ for convenience, where $\bQ_q\in \mathbb{C}^{N_r\times N_p},\forall q$.
According to Corollary \ref{corol:structure of Ta and Tf}, the matrix product  $\bT_a\bQ\bT_f$ can be written as
\begin{align}\label{eq:TaQTf}
\bT_a\bQ\bT_f &= (\mathbf{F}_{N_v}\otimes \mathbf{F}_{N_h})^H(\mathbf{\Lambda}_v\otimes\mathbf{\Lambda}_h)(\mathbf{F}_{N_v}\otimes \mathbf{F}_{N_h}) \bQ
(\bI_{Q}\otimes\mathbf{F}_{N_p})^H\mathbf{\Sigma}(\bI_{Q}\otimes\mathbf{F}_{N_p})\notag \\
&= \begin{bmatrix} 
\sum_{q=1}^{Q}(\mathbf{F}_{N_v}\otimes \mathbf{F}_{N_h})^H(\mathbf{\Lambda}_v\otimes\mathbf{\Lambda}_h)(\mathbf{F}_{N_v}\otimes \mathbf{F}_{N_h}) \bQ_q
\mathbf{F}_{N_p}^H\mathbf{\Sigma}_{q,1}\mathbf{F}_{N_p} \\
\vdots \\
\sum_{q=1}^{Q}(\mathbf{F}_{N_v}\otimes \mathbf{F}_{N_h})^H(\mathbf{\Lambda}_v\otimes\mathbf{\Lambda}_h)(\mathbf{F}_{N_v}\otimes \mathbf{F}_{N_h}) \bQ_q
\mathbf{F}_{N_p}^H\mathbf{\Sigma}_{q,Q}\mathbf{F}_{N_p}
\end{bmatrix}.
\end{align}
There are three kinds of matrix products in (\ref{eq:TaQTf}). For convenience, we analyze their complexities after utilizing the structure as follows.

\begin{enumerate}
\item The first kind of product is the product between an $M\times N$ matrix and an $N$-dimensional
diagonal matrix. Its complexity is  $O(NM)$.  

\item 
The second kind of product is the product between an $M\times N$ matrix and an $N$-dimensional
DFT matrix, which can be implemented by using the fast Fourier transform (FFT) and has complexity of $O(NM\log_2 N)$.
 
\item 

 The third kind product is the product between a matrix and the Kronecker product of
two DFT matrices. For example, we consider the matrix product  $(\bF_{N_1}\otimes \bF_{N_2})\bA$, where 
\begin{equation}
\bA=[\mathrm{vec}(\bA_1)\ \mathrm{vec}(\bA_2)\ \cdots\ \mathrm{vec}(\bA_M)]\in \mathbb{C}^{N\times M} \nonumber 
\end{equation}
 and $\bA_n\in \mathbb{C}^{N_2\times N_1},\forall n$. It can be calculated in the following way as
\begin{align*}
(\bF_{N_1}\otimes \bF_{N2})\bA &= \left[ \mathrm{vec}(\bF_{N2}\bA_1\bF_{N1}^T) ~~ \mathrm{vec}(\bF_{N2}\bA_2\bF_{N1}^T) ~~ \cdots ~~ \mathrm{vec}(\bF_{N2}\bA_M\bF_{N1}^T) \right].
\end{align*}
Its complexity after using FFT is $O(NM\log_2 N)$.
\end{enumerate}

By utilizing the structure provided in Corollary 1, we can obtain a low complexity version of Algorithm 1, the details of which is omitted for brevity.
Base on the complexity analysis of all three kinds of products provided on the above, the computational complexity of
the low-complexity method after utilizing the structure is calculated as  $\mathcal{O}(QN_rN_p\log_2(N_rN_p))$, which is much lower than that of using the direct matrix product.

\subsection{Angle domain BDCPM acquisition method in frequency-flat fading channels}
In the previous subsections, we have introduced the method of obtaining 2D-BDCPM based
on the model in (\ref{eq:receive model}). In fact, the proposed method is applicable as long as the received signal  model satisfies the following form
\begin{align}
\bY=\bA\bG\bB + \bZ,
\end{align}
where $\bA$ and $\bB$ are deterministic matrices, $\bG$ is a random matrix with independent entries,  $\bY$ is the receive matrix, and $\bZ$ is a noise matrix. The matrix $\bOmega=\mathbb{E}\left\{ \bG\odot\bG^* \right\}$ is the statistical parameter to be estimated. 

In this subsection, we present another application of the proposed method. We consider massive
MIMO transmission over frequency-flat fading channels which can be seen as a narrow-band sub-carrier  of the considered massive MIMO-OFDM system. In this system, $K$ UTs equipped with $M_t$ antennas sending pilot signals to a base station equipped with $M_r$ antennas. Let $\mathbf{X}_k\in \mathbb{C}^{M_t\times T}$ denote the uplink pilot signal transmitted by the $k$-th user. We assume the pilot signals of different users are orthogonal to each other for simplicity.

The received channel matrix $\mathbf{H}_{k,m} \in \mathbb{C}^{M_r \times M_t}$ on the $m$-th slot at the BS can be written as
\begin{equation}
         \mathbf{H}_{k,m}  =  \mathbf{V}_r{\mathbf{G}}_{k,m}\bV_t^T,
\end{equation}
where ${\mathbf{G}}_{k,m}$ denotes the angle domain channel matrix, $\mathbf{V}_r$ and $\mathbf{V}_t$ are the conversion matrices from
the angle domain to the space domain at the transmitter and receiver, respectively. Similarly to
the 2D-BSCM, $\mathbf{V}_r$ and $\mathbf{V}_t$ need not be unitary matrices.  

The received pilot signal $\mathbf{Y}_{m} \in \mathbb{C}^{M_r \times T}$  on the $m$-th slot at  the BS  can be written as
\begin{equation}
       \mathbf{Y}_m = \sum\limits_{k=1}^K \mathbf{H}_{k,m}\mathbf{X}_k + \mathbf{Z}_m = \sum\limits_{k=1}^K  \mathbf{V}_r{\mathbf{G}}_{k,m}\bV_t^T\mathbf{X}_k + \mathbf{Z}_m,
\end{equation}
where  
$\mathbf{Z}_m$ is the noise matrix whose elements are independent and identically distributed (i.i.d.) complex Gaussian random variables with zero mean and variance $\sigma_z^2$. 

Under the assumption that the pilot matrices of different UTs are orthogonal to each other, it is easy to obtain  that
\begin{align}
\bV_r^H\mathbf{Y}_m\bX_k^H\bV_t^* = \bV_r^H\bV_r\bG_{k,m}\bV_t^T\bX_k\bX_k^H\bV^*_t + \bV_r^H\bZ_m\bX_k^H\bV_t^*,
\end{align}
By defining $\bT_r$, $\bT_t$ and $\bN$ as
\begin{align}
\bT_r&=(\bV_r^H\bV_r)\odot(\bV_r^H\bV_r)^*, \\
\bT_t&=(\bV_t^T\bX_k\bX_k^H\bV_t^*)\odot(\bV_t^T\bX_k\bX_k^H\bV_t^*)^*, \\
\bN&=\sigma_z^2(\bV_r^T\odot\bV_r^H)\bone((\bX_k^T\bV_t)\odot(\bX_k^H\bV_t^*)),
\end{align}
we can obtain from Theorem \ref{th:theorem_channel_power_matrices_relation} that
\begin{align}
\Expb{(\bV_r^H\mathbf{Y}_m\bX_k\bV_t^*)\odot(\bV_r^H\mathbf{Y}_m\bX_k\bV_t^*)} = \bT_r\bOmega_k \bT_t + \bN,
\end{align}
where the matrix $\bOmega_{k}$ is the angle domain channel power matrix of the $k$-th user, which is defined as $\mathbb{E}\left\{ {\bG_{k,m}\odot\bG_{k,m}^*} \right\}$. It is obviously that the receive model has the same structure as (\ref{eq:statistical_CSI_equation}). Therefore, the proposed method can also be used here to estimate  the angle domain channel power matrix.

\section{Simulation Results}
In this section, we provide simulation results to show the performance of the proposed
algorithm. We adopt both the widely used QuaDRiGa   \cite{quadriga} and the 2D BSCM in \eqref{eq:channel_matrix_correlation_model} to generate channels  for simulation. In the simulation, a massive MIMO system with $M_{r,z}\times M_{r,x}=8\times 16$ is used in most simulations, and an extra-large scale massive MIMO system with $M_{r,z}\times M_{r,x}=16\times 64$ is also considered in the simulations of channel estimation. 
The antenna spacing on the base station side is set
to half wavelength and all UTs are equipped with single antennas. In QuaDRiGa, the scenario is set to “3GPP$\_$3D$\_$UMa$\_$NLOS”. The shadow fading and path
loss are not considered and the UTs in the cell are random uniformly distributed.

The major parameters of OFDM system are summarized in Table \ref{tab:OFDM}. According to these parameters, we can obtain $M_f=\lceil\frac{M_pM_g}{M_c}\rceil=9$. Therefore, the maximum number  of UTs that can be scheduled by orthogonal pilots in an OFDM symbol is $\lfloor \frac{M_p}{M_f} \rfloor = 13$. In the simulations, we select two scenarios with different numbers of UTs, one of which is $K=Q\times P=1\times 12$ and the other is $K=Q\times P=2\times 12$, where $Q$ and $P$ are the number of roots and the number of UTs per root, respectively.  We set the fine factors as $N_{a,z}=N_{a,x}=N_{a,f}=2$, which is enough to obtain good performance. Furthermore, the power of the transmitted pilots is set to 1. The signal-to-noise ratio (SNR) is given by $\mathrm{SNR}=\frac{1}{\sigma_z^2}$.
\begin{table}[H]
	\centering
	\caption{OFDM parameters}
	\label{tab:OFDM}
	\begin{tabular}{ll}
		\hline
		Parameters & Value \\
		\hline
		Carrier frequency $f_c$ &  4.8 GHz \\
		Subcarrier spacing $\Delta f$ & 30 kHz \\
		Guard interval $M_g$ & 144 \\
		Subcarriers number $M_c$ & 2048 \\
		Transmitting subcarriers number $M_p$ & 120 \\
		\hline
	\end{tabular}
\end{table}

\begin{figure}[h]
	\centering
	\includegraphics[width=0.6\linewidth]{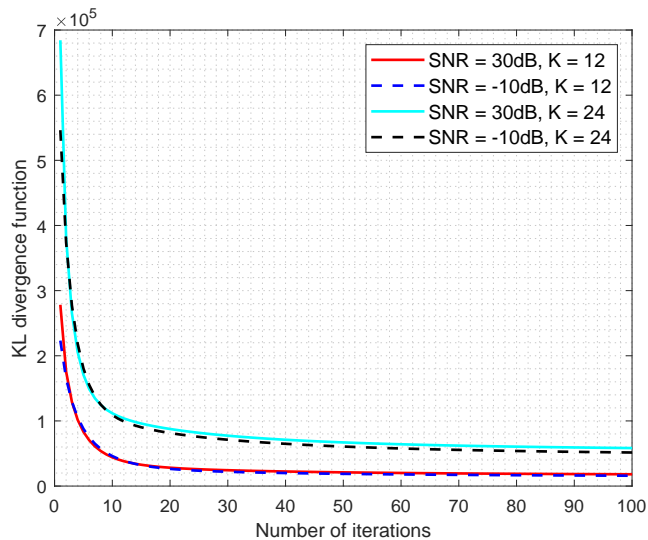}
	\caption{The objective function value under different iteration times in massive MIMO system with $M_{r,z}=8,M_{r,x}=16$, where the channel is generated from QuaDRiGa.}
	\label{fig:convergence performance}
\end{figure}

First, we use massive MIMO channels generated from QuaDRiGa for simulation. The aim is mainly to verify the convergence of the proposed algorithm and analyze the influence of noise and pilot interference on
the estimated BDCPM through the angle domain power spectrum, i.e., the mesh graph of the
BDCPM. In Fig.~\ref{fig:convergence performance}, the convergence performance of the proposed algorithm in four scenarios in massive MIMO system with $M_{r,z}=8,M_{r,x}=16$
are showed. The SNRs under consideration are $-10$dB and $30$dB, and the numbers of UTs are
$12$ and $24$. It can be observed that, the proposed algorithm can approach convergence within $20$
iterations in all the scenarios. Furthermore, the algorithm converges faster with fewer UTs than with more UTs.

\begin{figure}[htbp]
	\centering
	\subfigure[]{
		\begin{minipage}[t]{0.4\linewidth}
			\centering
			\includegraphics[width=2.5in]{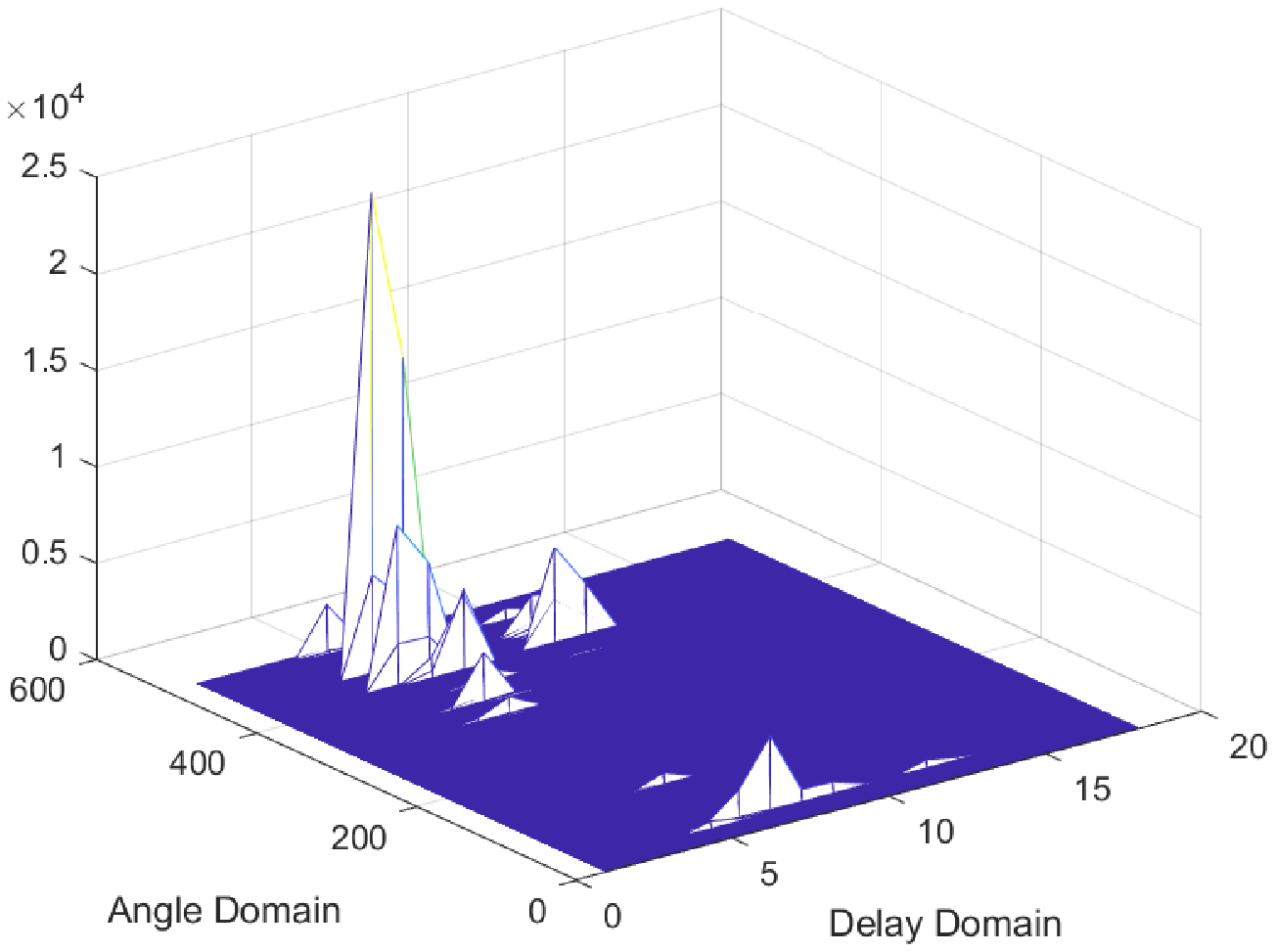}
			%\caption{fig1}
		\end{minipage}%
	}%
	\subfigure[]{
		\begin{minipage}[t]{0.4\linewidth}
			\centering
			\includegraphics[width=2.5in]{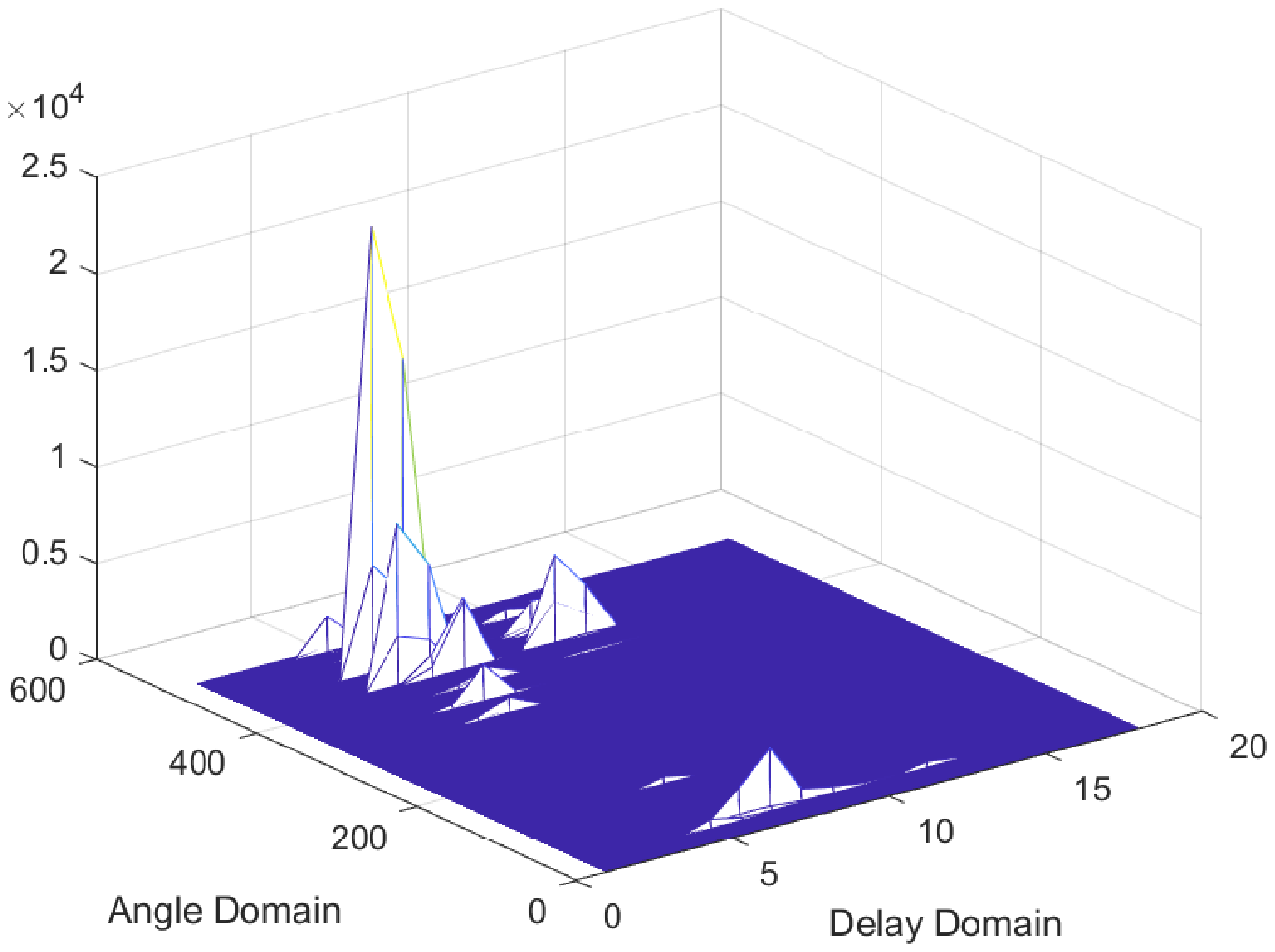}
			%\caption{fig2}
		\end{minipage}%
	}%

	\subfigure[]{
		\begin{minipage}[t]{0.4\linewidth}
			\centering
			\includegraphics[width=2.5in]{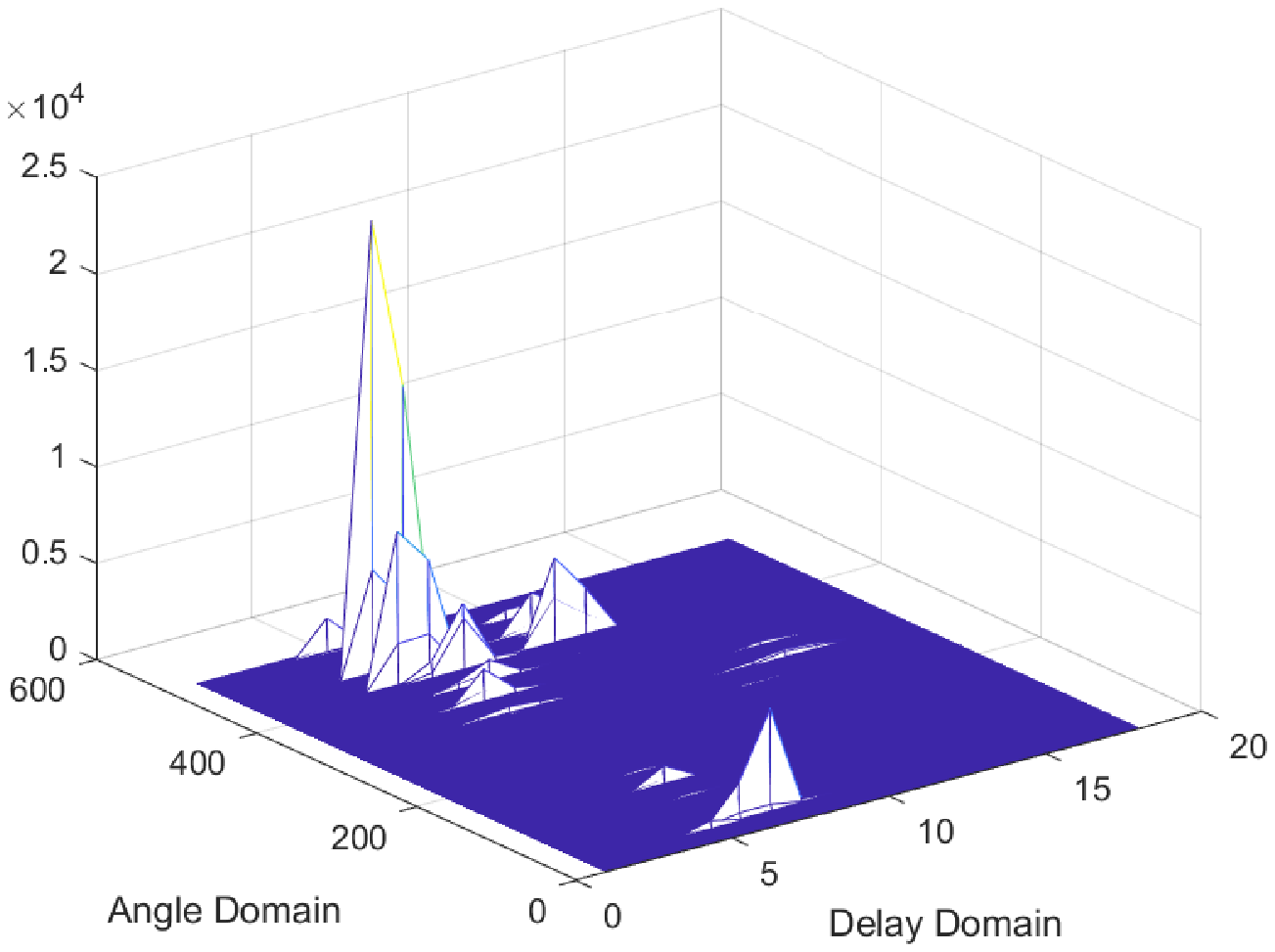}
			%\caption{fig2}
		\end{minipage}
	}%
	\subfigure[]{
		\begin{minipage}[t]{0.4\linewidth}
			\centering
			\includegraphics[width=2.5in]{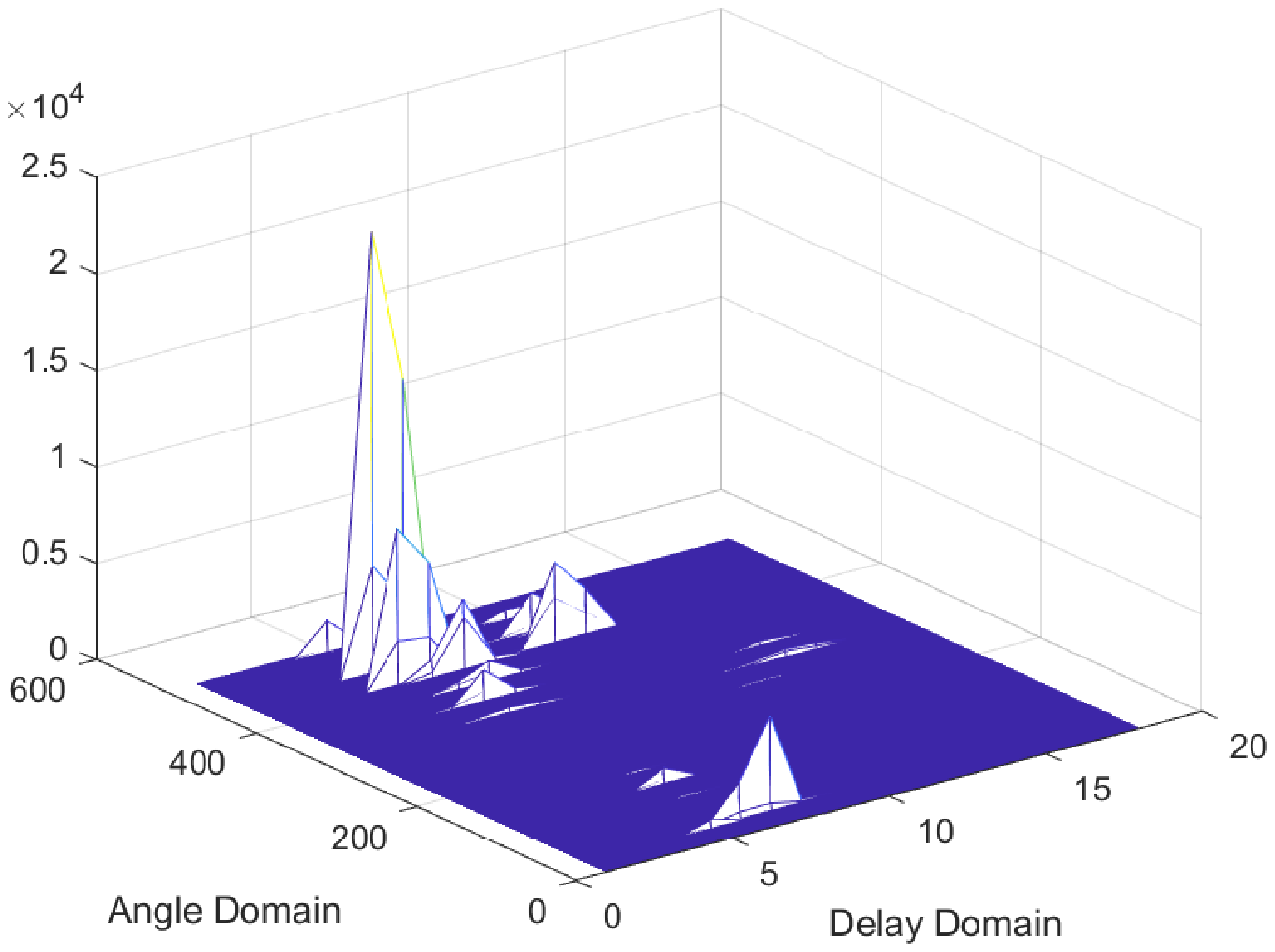}
			%\caption{fig2}
		\end{minipage}
	}%
	\centering
	\caption{Estimated angle delay domain power spectrum of the first user in a massive MIMO system with $M_{r,z}=8, M_{r,x}=16$ and (a)SNR = $30$dB, $K = 12$, (b)SNR=$-10$dB,  $K = 12$,  (c)SNR = $30$dB, $K = 24$, (d)SNR = $-10$dB, $K = 24$. The channel is generated from QuaDRiGa.} 
	\label{fig:mesh graph}
\end{figure}
 
In Fig.~\ref{fig:mesh graph}, we give the estimated angle delay domain power spectrum of the first user under different SNRs and numbers of UTs for the considered massive MIMO. The first user is the same user for $K=12$ and $K=24$. By comparing Fig.~\ref{fig:mesh graph}(a) with Fig.~\ref{fig:mesh graph}(b) or Fig.~\ref{fig:mesh graph}(c) with Fig.~\ref{fig:mesh graph}(d), it can be found that reducing the SNR has little effect on the obtained BDCPM. This indicates that the proposed algorithm has a prominent anti-noise effect. Similarly, by comparing Fig.~\ref{fig:mesh graph}(a) with Fig.~\ref{fig:mesh graph}(c) or Fig.~\ref{fig:mesh graph}(b) with Fig.~\ref{fig:mesh graph}(d), we can find that the BDCPM in Fig.~\ref{fig:mesh graph}(c) or Fig.~\ref{fig:mesh graph}(d) has more non-zero beams, which is mainly caused by the interference of the pilot signals on another root. Therefore, the pilot interference on other roots will reduce the accuracy of the BDCPM to some extent.

\begin{figure} 
	\centering
	\includegraphics[width=0.6\linewidth]{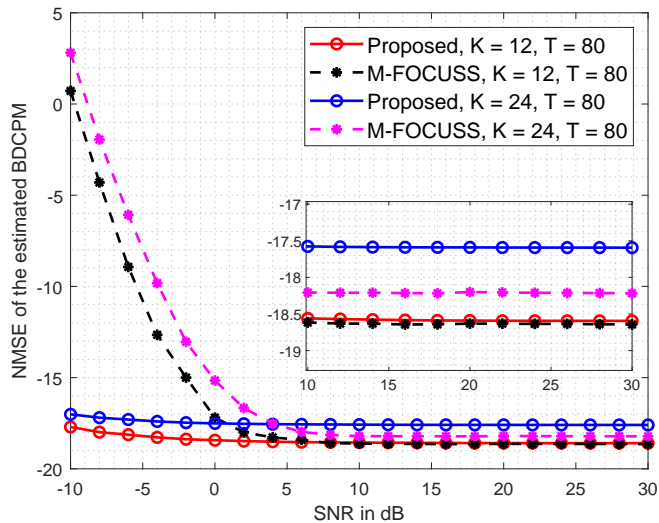}
	\caption{NMSE of the estimated BDCPM  versus SNRs for four  scenarios in a massive MIMO system with $M_{r,z}=8,M_{r,x}=16, T=80$, where the channel is generated from the 2D BSCM, the number of UTs is 12 and 24, and the methods are the M-FOCUSS algorithm and the proposed algorithm. }
	\label{fig:NMSE of Massive MIMO}
\end{figure}

Then, we evaluate the accuracy of the estimated BDCPM using the  channels generated from the 2D BSCM  with given BDCPM. The results of the proposed algorithm are used to compare with that of the M-FOCUSS  algorithm, which has been verified to have high accuracy under high SNR among a series of compressed sensing algorithms. Due to the high implementation complexity of M-FOCUSS in extra-large scale massive MIMO, we only compare the accuracy performance in massive MIMO simulation. Since the accurate BDCPM is known, we can use the normalized mean squared error (NMSE) between the estimated BDCPM $\hat{\bOmega}_k$ and the accurate BDCPM ${\bOmega}_k$ to evaluate the accuracy of the BDCPM. The NMSE in dB is defined as
\begin{equation}
	\mathrm{NMSE(dB)}\triangleq 10\log_2\left( \frac{1}{K}\sum_{k=1}^K \dfrac{\left\Arrowvert \hat{\bOmega}_{k} - {\bOmega}_{k} \right\Arrowvert_F^2}{\left\Arrowvert {\bOmega}_{k} \right\Arrowvert_F^2} \right)
\end{equation}
The simulation results of NMSE performance  of the estimated BDCPM are shown in Fig.~\ref{fig:NMSE of Massive MIMO}, where the massive MIMO system is also with $M_{r,z}=8,M_{r,x}=16, T=80$. It can be observed that the accuracy of the BDCPM obtained by the proposed algorithm is not less than that obtained by the M-FOCUSS method, no matter  orthogonal pilots with $K=12$  or non-orthogonal pilots with $K=24$ are used. Meanwhile, it can also be found that the proposed algorithm has strong anti-noise performance, while the accuracy of the M-FOCUSS algorithm drops sharply when the SNR is less than 0 dB. 

\begin{figure} 
	\centering
	\includegraphics[width=0.6\linewidth]{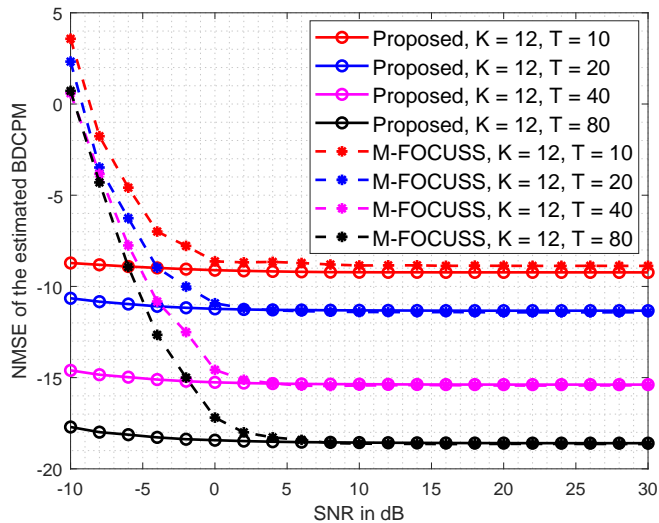}
	\caption{NMSE of the estimated BDCPM versus SNRs for four different numbers of samples in a massive MIMO system with $M_{r,z}=8,M_{r,x}=16, K=12$, where the channel is generated from the 2D BSCM  and the methods are the M-FOCUSS algorithm and the proposed algorithm. }
	\label{fig:NMSE of Massive MIMO Samples K=12}
\end{figure}
\begin{figure} 
	\centering
	\includegraphics[width=0.6\linewidth]{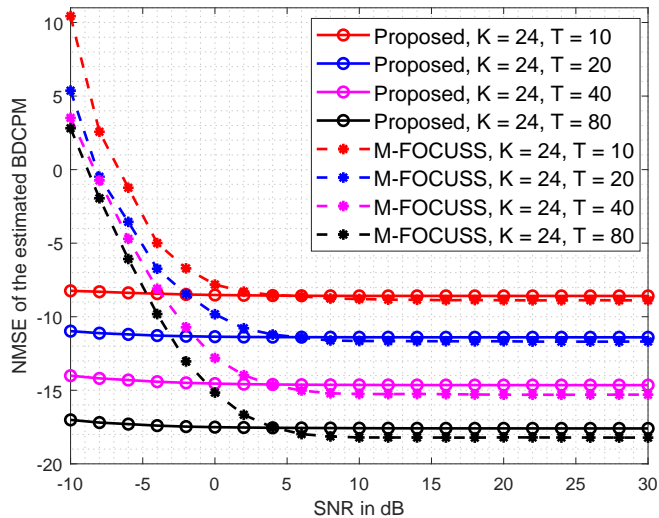}
	\caption{NMSE of the estimated BDCPM versus SNRs for different numbers of samples  in a massive MIMO system with $M_{r,z}=8,M_{r,x}=16, K=24$, where the channel is generated from the 2D BSCM and the methods are the M-FOCUSS  algorithm and the proposed algorithm. }
	\label{fig:NMSE of Massive MIMO Samples K=24}
\end{figure}

The accuracy of the estimated BDCPM also depends on the number of  received pilot signals, which has been set as $T=80$ in the simulations for Fig.~\ref{fig:NMSE of Massive MIMO}.  To show the relation between the NMSE performance of the estimated BDCPM and the number of samples used in the estimation,
we simulate the NMSE of the estimated BDCPM for different numbers of samples in the considered massive MIMO system with $M_{r,z}=8,M_{r,x}=16$. The numbers of samples are set as $T=10, 20, 40, 80$.
The simulation results of $K=12$ and $K=24$ for all cases are provided in  Fig.~\ref{fig:NMSE of Massive MIMO Samples K=12} and  
Fig.~\ref{fig:NMSE of Massive MIMO Samples K=24}, respectively.
We observe that the NMSE of the estimated BDCPM for  both the M-FOCUSS algorithm and the proposed algorithm can achieve close to $-10$dB performance with only 10 samples of receive pilot signals. Furthermore, the NMSE performances of the two algorithms in high SNR regime decreases almost linearly as the number of received pilot signals increases.
Thus, we do not need too many samples of receive pilot signals in practical massive MIMO systems to obtain the statistical channel information with good accuracy. 
Finally, the proposed algorithm greatly outperforms the M-FOCUSS algorithm
	in the low SNR regime, and achieves nearly the same performance as that of the latter in the high SNR regime for all cases.
It indicates that the proposed algorithm is robust to the noise, whereas the M-FOCUSS method need a larger SNR to work.

Next, we use the estimated BDCPM for the estimation of instantaneous CSI. To show the performance of channel estimation in a more realistic scenario, the channels are generated by QuaDRiGa rather than the 2D BSCM. We consider both the massive MIMO and extra-large scale MIMO case. Due to its high complexity, 
the M-FOCUSS method is only used in massive MIMO scenarios. The MMSE algorithm is used for channel estimation.
%Since the complexity of the MMSE estimation algorithm is too high, we use the IGA algorithm \cite{} for channel estimation which has been proven to achieve the performance of MMSE with lower complexity. 
We define the mean square error of space-frequency domain channel in dB as
\begin{align}
\mathrm{MSE(dB)}\triangleq 10\log_2\left( \frac{1}{KT}\sum_{k=1}^{K}\sum_{t=1}^{T} \left\Arrowvert \hat{\bH}_{k,t} - {\bH}_{k,t} \right\Arrowvert_F^2 \right),
\end{align}
where $ \left\Arrowvert \cdot \right\Arrowvert_F$ represents the Frobenius norm and the space-frequency domain channel ${\bH}_{k,t}$ has been normalized as $\left\Arrowvert{\bH}_{k,t}\right\Arrowvert_F^2=M_rM_p$. 
The simulation results of the channel estimation performance of the considered massive MIMO and extra-large scale MIMO are shown in Fig. \ref{fig:MSE of Massive MIMO} and Fig. \ref{fig:MSE of extra-large scale massive MIMO}, respectively. It can be observed in Fig. \ref{fig:MSE of Massive MIMO} that when the SNR is less than 10 dB, the proposed algorithm can bring significant channel estimation performance gain compared to the M-FOCUSS algorithm. When the SNR is greater than 10 dB, the proposed algorithm can also obtain comparable performance to the M-FOCUSS algorithm. Since the complexity of the proposed algorithm  is much lower than that of the M-FOCUSS algorithm, it is superior to the M-FOCUSS algorithm in estimating the statistical channel information. Finally, the effectiveness of the proposed algorithm at an extra-large scale massive MIMO is also verified in Fig. \ref{fig:MSE of extra-large scale massive MIMO}.

\begin{figure} 
	\centering
	\includegraphics[width=0.6\linewidth]{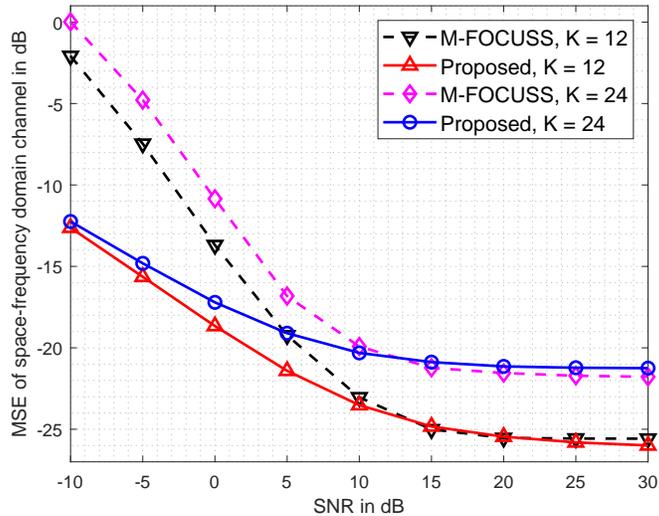}
	\caption{MSE versus SNR in a massive MIMO system with $M_{r,z}=8,M_{r,x}=16$, where the channel is generated from QuaDRiGa. Four scenarios are compared where the number of UTs is 12 and 24 and the methods are M-FOCUSS and the proposed algorithm, respectively. }
	\label{fig:MSE of Massive MIMO}
\end{figure}
\begin{figure} 
	\centering
	\includegraphics[width=0.6\linewidth]{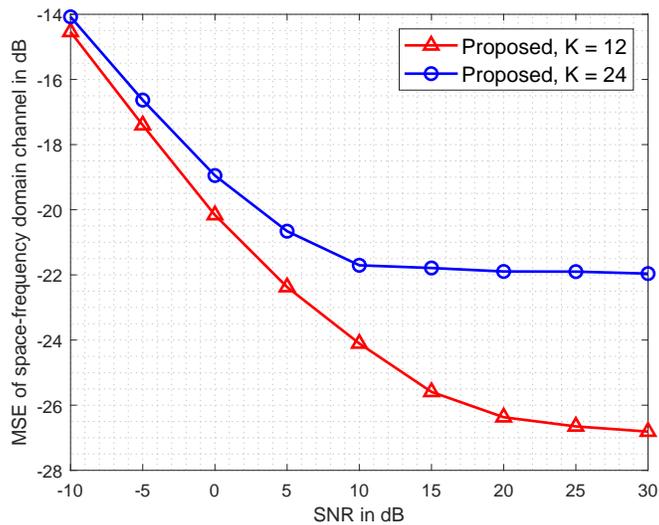}
	\caption{MSE versus SNR in an extra-large scale massive MIMO system with $M_{r,z}=16,M_{r,x}=64$, where the channel is generated from QuaDRiGa. Two scenarios are compared where the number of UTs is 12 and 24, respectively.}
	\label{fig:MSE of extra-large scale massive MIMO}
\end{figure}

% 仿真场景：quadriga，OFDM参数设置
% 信道估计算法介绍(why,cite)
% 对比算法：M-FOCUSS

% figure1: 算法收敛性能
% figure2：获取的Omega形状？
% figure3：128*120信道估计性能
% figure4: 1024*120(360)性能

\section{Conclusion}
In this paper,  the beam domain statistical CSI  estimation for the  2D-BSCM  in massive MIMO systems was investigated.  We considered the problem of estimating the BDCPMs based on multiple receive pilot signals. From the 2D-BSCM, we derived a receive model, which shows the relation between the statistical property of the receive pilot signals and the BDCPMs. Then, we formulated an optimization problem based on the Kullback-Leibler (KL) divergence and the receive model.
A novel method that estimates the BDCPMs without involving instantaneous CSI was proposed by solving the optimization problem. The proposed method has much lower complexity than that of the M-FOCUSS algorithm. The complexity of the proposed method was then further reduced by utilizing the circulant structures.  
Simulations results showed the proposed method works well and bring significant performance gain when used in channel estimation.
\appendices

\section{Proof of Theorem \ref{th:theorem_channel_power_matrices_relation}}
\label{appendice: theorem_channel_power_matrices_relation}
We define the matrix $\bGamma$ as  $\bGamma=\mathbb{E}\{(\mathbf{C}_1{\mathbf{G}}_{t}\mathbf{C}_2) \odot (\mathbf{C}_1{\mathbf{G}}_{t}\mathbf{C}_2) ^*\}$ for convenience. Then, the
entires  $[\bGamma]_{ij}$ can be specifically represented as
\begin{align}
[\bGamma]_{ij} &= \mathbb{E}\{[\mathbf{C}_1{\mathbf{G}}_{t}\mathbf{C}_2]_{ij} \odot [\mathbf{C}_1{\mathbf{G}}_{t}\mathbf{C}_2]_{ij} ^*\}
\notag \\ 
&= \Expect{}{\left( \sum_{p,q}[\bC_1]_{ip}[\bG_t]_{pq}[\bC]_{qj} \right)\left( \sum_{p',q'}[\bC_1]_{ip'}[\bG_t]_{p'q'}[\bC]_{q'j} \right)^*} \notag \\ 
&= \sum_{p,q,p',q'}[\bC_1]_{ip}[\bC_1]_{ip'}^*\Expect{}{[\bG_t]_{pq}[\bG_t]_{p'q'}^*}[\bC_2]_{qj}[\bC_2]_{q'j}^* .
\end{align}
Based on the assumption that each element of $\bG_t$ has zero mean and is independent of each other, i.e., $\Expect{}{[\bG_t]_{pq}[\bG_t]_{p'q'}^*}=\delta(p-p')\delta(q-q')[\bOmega]_{pq}$, we can simplify the expression of $[\bGamma]_{ij}$ to
\begin{align}
[\bGamma]_{ij} 
&=\sum_{p,q}\left| [\bC_1]_{ip} \right|^2 [\bOmega]_{pq}\left| [\bC_2]_{qj} \right|^2 .
\end{align}
This can be organized in a matrix form as
\begin{align}
\bGamma = (\bC_1\odot\bC_1^*)\bOmega(\bC_2\odot\bC_2^*).
\end{align}
By defining $\bT_1=\bC_1\odot\bC_1^*$ and $\bT_2=\bC_2\odot\bC_2^*$, we then obtain
\begin{align}
\mathbb{E}\{(\mathbf{C}_1{\mathbf{G}}_{t}\mathbf{C}_2) \odot (\mathbf{C}_1{\mathbf{G}}_{t}\mathbf{C}_2) ^*\}=  \mathbf{T}_1\boldsymbol{\Omega}\mathbf{T}_2,
\end{align}
Thus, we conclude the proof.

\section{Proof of Theorem \ref{th:theorem_of_gradients}}
\label{appendix:theorem of gradients}
We calculate the gradient $\frac{\partial  [\mathbf{T}_{a}(\mathbf{M} \odot \mathbf{M})\mathbf{T}_{f}]_{ij}}{\partial {\mathbf{M}}}$ first. The entries $[\mathbf{T}_{a}(\mathbf{M} \odot \mathbf{M})\mathbf{T}_{f}]_{ij}$ can be written as
\begin{align}
 [\mathbf{T}_{a}(\mathbf{M} \odot \mathbf{M})\mathbf{T}_{f}]_{ij} 
=  \sum_{p,q} [\bT_a]_{ip}[\bM\odot\bM]_{pq}[\bT_f]_{qj}  
\end{align}
We define $\mathbf{e}_i$ as the column vector whose only nonzero entry is its $i$-th element, which has value
of $1$. From $[\bT_a]_{ip}[\bT_f]_{qj} = \left[ \bT_a^T\be_i\be_j^T\bT_f^T \right]_{pq}$, we have that
\begin{align}
 [\mathbf{T}_{a}(\mathbf{M} \odot \mathbf{M})\mathbf{T}_{f}]_{ij}  
=& \sum_{p,q} \left[ \bT_a^T\be_i\be_j^T\bT_f^T \right]_{pq} [\bM\odot\bM]_{pq} \notag \\  
=& \sum_{p,q} \left[ \left( \bT_a^T\be_i\be_j^T\bT_f^T \right)\odot (\mathbf{M} \odot \mathbf{M}) \right]_{pq}.
\end{align}
Then, we can obtain
\begin{align}
\frac{\partial  [\mathbf{T}_{a}(\mathbf{M} \odot \mathbf{M})\mathbf{T}_{f}]_{ij}}{\partial {[\mathbf{M}]_{pq}}} = 2\left[ \bT_a^T\be_i\be_j^T\bT_f^T \right]_{pq}[\bM]_{pq},
\end{align}
so the gradient $\frac{\partial  [\mathbf{T}_{a}(\mathbf{M} \odot \mathbf{M})\mathbf{T}_{f}]_{ij}}{\partial {\mathbf{M}}}$ is given by
\begin{align}
\frac{\partial  [\mathbf{T}_{a}(\mathbf{M} \odot \mathbf{M})\mathbf{T}_{f}]_{ij}}{\partial {\mathbf{M}}} = 2\left( \bT_a^T\be_i\be_j^T\bT_f^T \right)\odot \mathbf{M} .
\end{align}

Next, the first target gradient can be calculated as
\begin{align}\label{eq:calc of partial1}
 \frac{\partial\sum_{ij}  [\mathbf{T}_{a}(\mathbf{M} \odot \mathbf{M})\mathbf{T}_{f}]_{ij}}{\partial {\mathbf{M}}} 
=& 2\left( \sum_{ij}\bT_a^T\be_i\be_j^T\bT_f^T \right) \odot \bM \notag \\
=& 2\left( \bT_a^T\left( \sum_{ij}\be_i\be_j^T \right)\bT_f^T \right) \odot \bM \notag \\
=& 2\left( \mathbf{T}_{a}^T\bone\mathbf{T}_{f}^T \right) \odot  \mathbf{M} .
\end{align}
Furthermore, the second target gradient can be calculated as
\begin{align}
&\frac{\partial \sum_{ij}[\mathbf{\Phi}]_{ij}\log \left[\mathbf{T}_{a}(\mathbf{M} \odot \mathbf{M})\mathbf{T}_{f}+ \mathbf{N} \right]_{ij}}{ \partial \mathbf{M}} \notag \\
=& \sum_{ij} \dfrac{[\bPhi]_{ij}}{[\mathbf{T}_{a}(\mathbf{M} \odot \mathbf{M})\mathbf{T}_{f}+ \mathbf{N}]_{ij}}  \dfrac{\partial  [\mathbf{T}_{a}(\mathbf{M} \odot \mathbf{M})\mathbf{T}_{f}]_{ij}}{\partial {\mathbf{M}}} \notag \\
=& 2\sum_{ij} \dfrac{[\bPhi]_{ij}}{[\mathbf{T}_{a}(\mathbf{M} \odot \mathbf{M})\mathbf{T}_{f}+ \mathbf{N}]_{ij}}  \left( \bT_a^T\be_i\be_j^T\bT_f^T \right)\odot \mathbf{M} .
\end{align}
By defining $\bQ$ as $[\mathbf{Q}]_{ij}=\frac{[\mathbf{\Phi}]_{ij}}{[\mathbf{T}_{a}\boldsymbol{\Omega}\mathbf{T}_{f}+  \mathbf{N}]_{ij}}$, we then obtain
\begin{align}\label{eq:calc of partial2}
&\frac{\partial \sum_{ij}[\mathbf{\Phi}]_{ij}\log \left[\mathbf{T}_{a}(\mathbf{M} \odot \mathbf{M})\mathbf{T}_{f}+ \mathbf{N} \right]_{ij}}{ \partial \mathbf{M}} \notag \\
=& 2\left( \sum_{ij}[\bQ]_{ij}\mathbf{T}_{a}^T\be_i  \be_j^T\mathbf{T}_{f}^T \right) \odot \bM \notag \\
=& 2\left( \mathbf{T}_{a}^T\left( \sum_{ij}\be_i[\bQ]_{ij}\be_j^T  \right) \mathbf{T}_{f}^T \right) \odot \bM \notag \\
=& 2\left(\mathbf{T}_{a}^T\mathbf{Q}\mathbf{T}_{f}^T\right) \odot \mathbf{M}.
\end{align} 
Since $\bT_a$ and $\bT_f$ are symmetric matrices, then (\ref{eq:calc of partial1}) and (\ref{eq:calc of partial2}) can also be written as
	\begin{align}
	&\frac{\partial\sum_{ij}  [\mathbf{T}_{a}(\mathbf{M} \odot \mathbf{M})\mathbf{T}_{f}]_{ij}}{\partial {\mathbf{M}}}
	= 2\left( \mathbf{T}_{a}\bone\mathbf{T}_{f} \right) \odot  \mathbf{M},
	\\
	& \frac{\partial \sum_{ij}[\mathbf{\Phi}]_{ij}\log \left[\mathbf{T}_{a}(\mathbf{M} \odot \mathbf{M})\mathbf{T}_{f}+ \mathbf{N} \right]_{ij}}{ \partial \mathbf{M}}
	=  2\left(\mathbf{T}_{a}\mathbf{Q}\mathbf{T}_{f}\right) \odot \mathbf{M}.
	\end{align}

\section{Proof of Theorem \ref{th:theorem of Toeplitz}}
\label{appendice: theorem of Toeplitz}
From the properties of the circulant matrix \cite{circulantmatrices}, we can know that $\bA^H\bD\bA$ is a circulant matrix when $\bD=\diag{\bd}$ is a diagonal matrix and $\bA=\bI_{M,N}\bF_N$ is an oversampled DFT matrix. Therefore, $\bA^H\bD\bA$ can be represented in the following form
\begin{align}
\bA^H\bD\bA = \sum_{i=0}^{N-1}[\bc]_i\bPi_N^{N-i},
\end{align}
where $\bc$ is the first column of the matrix $\bA^H\bD\bA$, which can be calculated as
\begin{align}
\bc &= (\bA^H\bD\bA)\be_1 \notag \\
&=\bA^H\bd \notag \\
& = \bF_N^H\tilde{\bd}.
\end{align}
The vector $\tilde{\bd}$ is defined as $[{\bd}^T~\bzero_{N-M,1}^T]^T$. Then, we have
\begin{align}
(\bA^H\bD\bA)\odot(\bA^H\bD\bA)^* &= \left(\sum_{i=0}^{N-1} [\bc]_i\bPi_N^{N-i}\right) \odot \left(\sum_{j=0}^{N-1} [\bc]_j\bPi_N^{N-j}\right)^* \notag \\
&= \sum_{i=0}^{N-1}\sum_{j=0}^{N-1}[\bc]_i[\bc]_j^*\left( \bPi_N^{N-i}\odot\bPi_N^{N-j} \right)\notag \\
&\overset{(a)}{=} \sum_{i=0}^{N-1}[\bc]_i[\bc]_i^*\bPi_N^{N-i} .
\end{align}
Equation (a) holds because $\bPi_N^{i}\odot\bPi_N^{j} = \delta(i-j)\bPi_N^{i}$. Obviously, $(\bA^H\bD\bA)\odot(\bA^H\bD\bA)^*$ is also a circulant matrix with $\bc\odot\bc^*$ in the first column. Thus, it can be represented as
\begin{align}
(\bA^H\bD\bA)\odot(\bA^H\bD\bA)^* = \bF_N^H\bLambda\bF_{N},
\end{align}
where $\bLambda=\diag{\blambda}$ is a diagonal matrix and satisfies
\begin{align}
\bc\odot\bc^*=\bF_N^H\blambda.
\end{align}
Then, we obtain
\begin{align}
\bLambda&=\frac{1}{N}\diag{\bF_N (\bc\odot\bc^*) }\notag \\
&= \frac{1}{N}\diag{\bF_N \left((\bF_N^H\tilde{\bd})\odot(\bF_N^H\tilde{\bd})^*\right) }.
\end{align}
Thus, we conclude the proof.

\section{Proof of Corollary \ref{corol:structure of Ta and Tf}}
\label{appendix:structure of Ta and Tf}
Let us review the formulas of $\bT_a$ and $\bT_f$ from equation (\ref{eq:Ta}) and (\ref{eq:Tf}). First, we substitute $\bV=\bV_z\otimes\bV_x$ into (\ref{eq:Ta}), as following
\begin{align}
\bT_a &= (\bV^H\bV)\odot(\bV^H\bV)^* \notag \\
&= \left((\bV_z\otimes\bV_x)^H(\bV_z\otimes\bV_x)\right)\odot\left((\bV_z\otimes\bV_x)^H(\bV_z\otimes\bV_x)\right)^* \notag \\
&=\left((\bV_z^H\bV_z)\otimes(\bV_x^H\bV_x)\right)\odot\left((\bV_z^H\bV_z)\otimes(\bV_x^H\bV_x)\right)^* \notag \\
&=\left((\bV_z^H\bV_z)\odot(\bV_z^H\bV_z)^*\right)\otimes\left((\bV_x^H\bV_x)\odot(\bV_x^H\bV_x)^*\right).
\end{align}
From Theorem \ref{th:theorem of Toeplitz}, and the fact that $\bV_z=\bI_{M_z,N_z}\bF_{N_z}$ and $\bV_x=\bI_{M_x,N_x}\bF_{N_x}$ are oversampled DFT matrices, we obtain that $(\bV_z^H\bV_z)\odot(\bV_z^H\bV_z)^*$ and $(\bV_x^H\bV_x)\odot(\bV_x^H\bV_x)^*$ are circulant matrices, written as 
\begin{align}
&(\bV_z^H\bV_z)\odot(\bV_z^H\bV_z)^* = \bF_{N_{z}}^H\bLambda_{z}\bF_{N_{z}}, \\ 
&(\bV_x^H\bV_x)\odot(\bV_x^H\bV_x)^* = \bF_{N_{x}}^H\bLambda_{x}\bF_{N_{x}},
\end{align}	 
where $\mathbf{\Lambda}_v$ and $\bLambda_h$ are diagonal matrices, defined as 
	\begin{align} 
	&\bLambda_z = \frac{1}{N_z}\diag{\bF_{N_z}\left( (\bF_{N_z}^H\bd_z)\odot(\bF_{N_z}^H\bd_z)^* \right)},\\
	&\bLambda_x = \frac{1}{N_x}\diag{\bF_{N_x}\left( (\bF_{N_x}^H\bd_x)\odot(\bF_{N_x}^H\bd_x)^* \right)},\\
	& \bd_z = \left[ \bone_{M_z,1}^T~ \bzero_{N_z-M_z,1}^T \right]^T, \\
	& \bd_x = \left[ \bone_{M_x,1}^T~ \bzero_{N_x-M_x,1}^T \right]^T. 
	\end{align} 
Therefore, $\bT_a$ can be represented as
\begin{align}\label{eq:Ta old}
\bT_a &= \left( \bF_{N_{z}}^H\bLambda_{z}\bF_{N_{z}} \right)\otimes\left( \bF_{N_{x}}^H\bLambda_{x}\bF_{N_{x}} \right)\notag \\
&= (\mathbf{F}_{N_z}\otimes \mathbf{F}_{N_x})^H(\mathbf{\Lambda}_z\otimes\mathbf{\Lambda}_x)(\mathbf{F}_{N_z}\otimes \mathbf{F}_{N_x}) .
\end{align}

Next, we derive the DFT structure of $\bT_f$. Because $\bT_f$ in (\ref{eq:Tf}) consists of a series of sub-matrices, i.e.,
\begin{align*}
\left(\mathbf{U}^T\tilde{\bX}_{q_1}\tilde{\bX}_{q_2}^H\mathbf{U}^*\right)\odot \left(\mathbf{U}^T\tilde{\bX}_{q_1}\tilde{\bX}_{q_2}^H\mathbf{U}^*\right)^*,\ q1,q2=1,..,Q.
\end{align*}
Since $\bU=\bI_{M_p,N_p}\bF_{N_p}$ is an oversampled DFT matrix and $\tilde{\bX}_{q_1}\tilde{\bX}_{q_2}^H=\diag{\tilde{\bx}_{q_1}\odot\tilde{\bx}_{q_2}^*}$ is a diagonal matrix, we obtain that each submatrix can be represented as
\begin{align}\label{eq:Tf submatrix}
\left(\mathbf{U}^T\tilde{\bX}_{q_1}\tilde{\bX}_{q_2}^H\mathbf{U}^*\right)\odot \left(\mathbf{U}^T\tilde{\bX}_{q_1}\tilde{\bX}_{q_2}^H\mathbf{U}^*\right)^*
= \bF_{N_p}^H \bSigma_{q_1,q_2} \bF_{N_p},
\end{align}
where $\bSigma_{q_1,q_2}$ is a diagonal matrix, defined as 
\begin{align}
\label{eq:Sigma q1 q2}\bSigma_{q_1,q_2} &= \frac{1}{N_p}\diag{\bF_{N_p}\left( (\bF_{N_p}^H\bd_{q_1,q_2})\odot(\bF_{N_p}^H\bd_{q_1,q_2})^* \right)}, \\
\bd_{q_1,q_2} &= \left[ \left(\tilde{\bx}_{q_1}\odot\tilde{\bx}_{q_2}^*\right)^T~ \bzero_{N_p-M_p,1}^T \right]^T .
\end{align} 
Combining (\ref{eq:Tf}) and (\ref{eq:Tf submatrix}), we have
\begin{align}
\bT_f &= \begin{bmatrix}
\bF_{N_p}^H \bSigma_{1,1} \bF_{N_p} &\cdots& \bF_{N_p}^H \bSigma_{1,Q} \bF_{N_p} \\
\vdots & \ddots & \vdots \\
\bF_{N_p}^H \bSigma_{Q,1} \bF_{N_p} &\cdots& \bF_{N_p}^H \bSigma_{Q,Q} \bF_{N_p} 
\end{bmatrix} \notag \\
&=(\bI_Q\otimes\bF_{N_p})^H\begin{bmatrix}
\bSigma_{1,1} & \cdots & \bSigma_{1,Q} \\
\vdots & \ddots & \vdots \\
\bSigma_{Q,1} & \cdots & \bSigma_{Q,Q} 
\end{bmatrix}(\bI_Q\otimes\bF_{N_p}).
\end{align}
By defining $\bSigma$ as
\begin{align}
\bSigma\triangleq \begin{bmatrix}
\bSigma_{1,1} & \cdots & \bSigma_{1,Q} \\
\vdots & \ddots & \vdots \\
\bSigma_{Q,1} & \cdots & \bSigma_{Q,Q} 
\end{bmatrix},
\end{align}
we then obtain
\begin{align}
\bT_f &= (\bI_{Q}\otimes\mathbf{F}_{N_p})^H\mathbf{\Sigma}(\bI_{Q}\otimes\mathbf{F}_{N_p}).
\end{align}
Thus, we conclude the proof.

\bibliographystyle{IEEEtran}
\bibliography{IEEEabrv,this_reference}

% that's all folks
\end{document}